\newif\ifdraft
\newif\ifarxiv
\long\def\comment#1{}
\newcommand*{\href}[2]{#2}
\newtheorem{theorem}{Theorem}
\begin{document}

\title{An Introduction to Mechanized Reasoning\footnote{We are grateful to Makarius Wenzel for help refining our code, to Marco Caminati for research assistance, to Peter Cramton, Paul Klemperer, Peter Postl, Indra Ray, Rajiv Sarin, Arunava Sen and Ron Smith for comments, and to the EPSRC for funding (EP/J007498/1). Rowat thanks Birkbeck for its hospitality.  The presentation of the formal proof of Vickrey's theorem is based on \citet{ke-la-ro-14}.  Finally, we are grateful to two anonymous referees and the co-editor for working with us to improve this paper.}}

\author{Manfred Kerber\footnote{School of Computer Science, University of Birmingham, UK}
\and Christoph Lange\footnote{Fraunhofer IAIS and University of Bonn, Germany}
\and Colin Rowat\footnote{Department of Economics, University of Birmingham, Edgbaston B15 2TT, UK, c.rowat@bham.ac.uk, +44 121 414 3754.  Corresponding author}
}

\date{\today}

\maketitle

\begin{abstract}
  Mechanized reasoning uses computers to verify proofs and to help discover new theorems.  Computer scientists have applied mechanized reasoning to economic problems but -- to date -- this work has not yet been properly presented in economics journals.  We introduce mechanized reasoning to economists in three ways.  First, we introduce mechanized reasoning in general, describing both the techniques and their successful applications.  Second, we explain how mechanized reasoning has been applied to economic problems, concentrating on the two domains that have attracted the most attention: social choice theory and auction theory.  Finally, we present a detailed example of mechanized reasoning in practice by means of a proof of Vickrey's familiar theorem on second-price auctions.

  \vspace{0.5cm}

  {\it Key words}: mechanized reasoning, formal methods, social choice theory, auction theory

  \vspace{0.5cm}

  {\it JEL classification numbers}: B41; C63; C88; D44

\ifdraft
  \vspace{0.5cm}

  {\it This file}: \currfilepath
\fi
\end{abstract}

\section{Introduction}

Mechanized reasoners automate logical operations, extending the scope of mechanical support for human reasoning beyond numerical computations (such as those carried out by a calculator) and symbolic calculations (such as those carried out by a computer algebra system).  Such reasoners may be used to formulate new conjectures, check existing proofs, formally encode knowledge, or even prove new results.  The idea of mechanizing reasoning dates back at least to \citet{Leibniz86}, who envisaged a machine which could compute the validity of arguments and the truth of mathematical statements.  The development of formal logic from 1850 to 1930, the advent of the computer, and the inception of \textit{artificial intelligence} (AI) as a research field at the Dartmouth Workshop in 1956 all paved the way for the first mechanized reasoners in the 1950s and 1960s.\footnote{Perhaps unsurprisingly, \citeauthor{gar-52} was ahead of his time in mechanized reasoning as well: four years before his regular columns with \textit{Scientific American} began, his first article for them included a template allowing readers to make their own mechanized reasoners -- out of paper.}

Since then, mechanized reasoning has been both less and more successful than anticipated.  In pure maths, mechanized reasoning has helped prove only a few high-profile theorems.  Perhaps surprisingly -- although consistent with the greater success of applied AI over `pure' AI -- mechanized reasoning and formal methods\todo{CL: really “from the axioms ZFC”, or “from the axioms \emph{of} ZFC”?  CR@CL: yes}\footnote{The term \textit{formal methods} is used here to denote approaches to establishing the correctness of mathematical statements to a precision that they can be meticulously checked by a computer.  Rather than being seen as distinct from other mathematical methods, researchers in the area see them as the next step in mathematics' march towards greater precision and rigor \citep{wie-08}.  Consider: ``A Mathematical proof is rigorous when it is (or could be) written out in the first-order predicate language $L \left( \in \right)$ as a sequence of inferences from the axioms ZFC'' \citep{mac-86}.  The advantages of taking this next step with computers include: a computer system is never tired or intimidated by authority, it does not make hidden assumptions, and can easily be rerun.  A pioneer of mechanized reasoning -- who saw himself building on Bourbaki's formalism -- referred to computers as ``slaves which are such persistent plodders'' \citep{wan-60}.} have enjoyed greater success in industrial applications, as applied to both hardware and software design. In the past decade or so, computer scientists have also begun to apply formal methods to economics.

A central inspiration for this recent work are \citeauthor{gea-05}' three brief proofs of Arrow's impossibility theorem \citep{gea-05}.\footnote{All three use \citeauthor{bar-80}'s replacement of Arrow's \textit{decisive voter} with a \textit{pivotal voter} \citep{bar-80}.  \citet{bar-83} also used this approach to find a direct proof of the Gibbard-Sattherthwaite theorem.}  Initially, \citet{nip-09}, \citet{wie-07}, and \citet{wie-09} used theorem provers to encode and verify two of \citeauthor{gea-05}' proofs. A subsequent generation of work, drawing on the inductive proof of Arrow's theorem in \citet{suz-00}, used formal methods to discover new theorems.  \citet{ta-li-09} introduced a hybrid technique, using computational exhaustion to show that Arrow holds on a small base case of two agents and three alternatives, and then manual induction to extend that to the full theorem.  By inspecting the results of the computational step, they were able to discover a new theorem subsuming Arrow's.  \citet{ta-li-11} used this approach -- exhaustively generating and evaluating base cases, and then using a manual induction proof to generalize the results -- to establish uniqueness conditions for pure strategy Nash equilibrium payoffs in two player static games; they published manual proofs of two of the most significant theorems discovered this way in \citet{ta-li-11geb}.  \citet{ge-en-11} used the approach to generate 84 impossibility theorems in the `ranking sets of objects' problem \citep{ba-bo-pa-04}.

To date, the economics literature remains almost untouched by research applying mechanized reasoning to economic problems.\footnote{A recent symposium on economics and computer science, involving central figures at the interface between the disciplines, made no mention of mechanized reasoning \citep[q.v.][]{bl-ea-kl-kl-ta-15}.}  The one exception that we are aware of is \citet{ta-li-11geb}, whose two theorems were discovered computationally, but proved manually.\footnote{The process by which the theorems were discovered is described in \citet{ta-li-11}; \citet{ta-li-11geb} itself is all but silent on its mechanized origins.}  As it is our view that these tools will become increasingly capable, this paper aims to introduce economists to mechanized reasoning.\footnote{For more general introductions, see \citet{wie-08} and \citet{av-ha-14}.  \citet{harrison-ab} introduces mechanized reasoning alongside computer algebra, presenting something of a unified view.}  It does so by means of three analytical lenses, each with narrower scope but greater magnification than its predecessor.

First, Section \ref{se:imr} presents an overview of mechanized reasoning in general.  We do so by setting out a classificatory scheme, with the caveat that it should not be seen as implying a partition on the field: interesting research will straddle boundaries, perhaps even forcing them to be redefined.\footnote{For example, we shall see that mechanized theorem discovery is usually associated with inductive reasoning.  However -- in economic examples -- the most fruitful examples of theorem discovery \citep{ta-li-09,ta-li-11,ta-li-11geb,ge-en-11} have combined very simple deductive reasoning systems with human intelligence.}

Second, Section \ref{se:amre} surveys the emerging literature applying mechanized reasoning to economics.  We structure this survey primarily according to the problem domain within economics, referring only secondarily to our classificatory scheme.  We do this to focus on the economic insights -- primarily within social choice and auction theory -- made possible by these techniques, rather than on the techniques \textit{per se}.

Finally, to make this introduction more concrete, Section \ref{se:mpv} provides an example of what mechanized reasoning looks like in practice, presenting a blueprint of a mechanized proof of Vickrey's theorem on second-price auctions.  We present such an established theorem to focus attention on its implementation.

Section \ref{se:concle} concludes, and suggests some possible next steps for mechanized reasoning in economics.

\section{Mechanized reasoning} \label{se:imr}

Our overview of mechanized reasoning distinguishes between deductive and inductive systems.  While the distinction has been recognized at least since Aristotle, deductive reasoning -- which allows reliable inference of unknown facts from established facts -- has been in the focus of the mechanized reasoning community.  Inductive reasoning also generalizes from individual cases, but does not restrict itself to reliable inferences; the cost of this additional freedom is that its conjectures must then be tested.

\subsection{Deductive reasoning}\label{subsection:deductive-reasoning}

Historically, deductive reasoning systems were among the first AI systems, dating back to the 1950s.  While the origins of deductive reasoning date to at least Aristotle, modern advances in this area built on the work of logicians in the second half of the 19th century and the start of the 20th \citep[e.g.][]{WhiteheadRussell10}.  At the Dartmouth Workshop in 1956, \citeauthor{ne-si-56} introduced the \href{https://en.wikipedia.org/wiki/Logic_Theorist}{Logic Theorist}, an automated reasoner which re-proved 38 of the 52 theorems in Whitehead and Russell's \textit{Principia Mathematica}~\citep{WhiteheadRussell10}.\footnote{According to \citet{mcc-04}, Russell himself ``responded with delight'' when shown the Logic Theorist's proof of the isosceles triangle theorem, whose proof was more elegant than their manual one.}

Abstractly, a deductive reasoner implements a \textit{logic} -- which is comprised of a \textit{syntax} defining well-formed formulae and a \textit{semantics} assigning meaning to formulae -- and a \textit{calculus} for deriving formulae (called theorems) from formulae (called premises or axioms).  Historically, subfields of mechanized reasoning have been defined by choice of logic, calculus and problem domain.  This section provides a classificatory scheme based, first, on the choice of calculus.  Following the choice of calculus, a logic is chosen to balance expressiveness and tractability. Finally, the problem domain itself will dictate some of the specialized features of a mechanized reasoner.

When a mechanized reasoner applies the calculus' permissible operations to the axioms to obtain new, syntactically-correct formulae it does not make use of the semantics: the semantics, or ascribed meanings, yield models that may assist human intuition, but which are not necessary to the formal process of reasoning itself.\footnote{Beginning with Euclid's efforts to axiomatize geometry, logicians have produced syntactical descriptions that make semantic references  obsolete: Hilbert allegedly said that we would still have an axiomatization of geometry if we replaced the words `point', `line', and `plane' by `beer mug', `bench', and `table' \citep[p.6]{Hoffmann13}.}  Crucially, mechanized reasoning involves manipulating symbols.\footnote{That this was an insight at one point may be inferred from \citeauthor{tur-36}'s famous explanation that, ``computing is normally done by writing certain symbols on paper'' \citep{tur-36}}

Thus, mechanized deductive reasoning since the Logic Theorist has seen reasoning as a search task for a syntactically well-defined goal.\footnote{As noted by \citet{harrison-ab}, specialist provers have also been developed for particular problems for which more structured approaches than general search are appropriate.}  Further, as the spaces through which search occurred was potentially large, successful reasoning would use \emph{heuristics} to avoid unprofitable sequences of operations.  From this point of view, mechanized reasoning operates as chess computers do.\footnote{Indeed, Newell's collaboration with Simon began after the latter became aware of the former's work on a chess machine.}  For a chess computer, the premises' intended semantic interpretations are the board, its pieces and their positions; the calculus specified permissible moves.  A chess computer could then test manually discovered solutions to chess puzzles by verifying that each move satisfies the requirements of its calculus, with the final operation yielding the goal-formula.  More ambitiously, and interestingly, chess programs discover solutions (e.g.\ sequences of winning moves) by searching through permissible operations, with the benefit of heuristics (e.g.\ regarding relative values of pieces).


\comment{Let $\Gamma$ denote a set of premises, and $\varphi$ a formula; then $\Gamma \models \varphi$ means ``$\varphi$ follows from $\Gamma$'' (semantically, if all the formulae in $\Gamma$ always hold in the logical environment, then $\varphi$ holds in it as well) while $\Gamma \vdash \varphi$ means ``$\varphi$ can be derived from $\Gamma$'' (by manipulating symbols $\varphi$ can be generated from $\Gamma$ by applying the calculus' rules).  A calculus is \textit{sound} if $\left( \Gamma \vdash \varphi \right)$ implies $\left( \Gamma \models \varphi \right)$.  Deductive reasoning is sound; inductive reasoning, considered below, is not.}

A set of premises and a formula may be related in two different ways.  First, the \textit{semantic consequence relation} describes situations in which the formula \textit{follows from} the premises: if the symbols in the premises are interpreted in such a way that the formulae in the premises are all true, then the formula is also true when the symbols in it are interpreted in the same way. Second, the \textit{syntactic derivibility relation} describes situations in which the formula \textit{can be derived from} the premises: it is possible to generate the formula from the premises by applying a fixed set of so-called calculus rules. (An example of such a rule is \textit{modus ponens}: From $A$ and $A \rightarrow B$ it is possible to derive $B$, where $A$ and $B$ may match any formal expression).  A proof that applies such rules, without any appeals to intuition or to the reader filling in steps on her own, is called a \textit{formal proof} of the formula using the premises.

A calculus is called \textit{sound} if only formulae can be derived from the premises that actually follow from them.  Deductive reasoning is sound; inductive reasoning, considered below, is not.

A calculus is \textit{complete} if it allows derivation of any formula that follows from the set of premises.  A calculus is \textit{decidable} if, for any set of premises and any formula, there is a procedure that either derives the formula from the premises or proves that no such derivation exists; a calculus is \textit{semi-decidable} if a procedure exists that derives the formula from premises, whenever the formula follows from them (but may not terminate if it does not).

Decidability typically depends on the expressiveness of the logic used: more expressive logics model a richer set of concepts, but are generally harder to manipulate.  While ambitious exercises in mechanized reasoning often begin by specifying a suitably tailored logic\footnote{See, for example, the \textit{judgement aggregation logic} (JAL) of \citet{ag-ho-wo-11}.}, we largely restrict our attention to some of the best known \textit{classical logics}.\footnote{The 17 volumes in the second edition of \citet{gab-gue} make clear that the classical logics are a small subset of all logics.}

\paragraph{Propositional (Boolean) logic:}
\textit{Propositional} or \textit{Boolean logic}, the simplest classical logic, only uses propositional variables -- which are either true or false -- and \textit{connectives} such as $\wedge$ (and), $\vee$ (or), $\neg$ (not), and $\rightarrow$ (implies).  An example of a propositional formula is
\[
  \mathit{first\_bidder\_bids\_highest} \wedge \mathit{second\_bidder\_bids\_lowest}.
\]
Propositional logic can only make concrete, finite statements, but has a sound, complete and decidable calculus.

An advantage of this decidability is that it may allow \textit{push-button} technology, which does not require specialist knowledge in order to use. Once a problem is adequately represented a corresponding system solves the problem fully automatically.

\paragraph{First-order logic:}
\textit{First-order logic} (FOL) is more expressive.  First, it can speak about  objects (e.g.\ ``bidder $b_1$'') 
and their properties (e.g.\ ``bidder $b_1$ wins auction'', $\mathit{bidder}(b_1) \land \mathit{wins} \left( b_1 \right)$). Second, $\exists$ and $\forall$ allow quantification over objects.  For example, ``every losing bidder pays nothing'' may be expressed as
\begin{equation} \label{eq:FOL-bid}
  \forall i\,.\, \mathit{bidder}(i) \rightarrow(\neg\mathit{wins} \left( i \right) \rightarrow \mathit{pay} \left( i \right) = 0).
\end{equation}
Expressions like $\mathit{wins}$ are called \textit{predicates}, Boolean functions which -- when applied to their arguments -- evaluate to either true or false.  G\"{o}del's completeness theorem proves that FOL has a sound and complete calculus, but FOL has only semi-decidable calculi. Furthermore, FOL is not expressive enough to express the finitude or (per negation) infinity of the non-empty sets of objects.\footnote{Thus, FOL could not express that only finitely many bidders participate in an auction.}

\textit{Many-sorted FOL} uses \textit{sorts} to extend first-order logic, not to add to its expressiveness, but to allow more concise representations, and -- therefore -- more efficient proving.  Sorts restrict the instantiation of variables to expressions of a certain sort. For instance, sorts allow us to specify that variable $i$ is a bidder, and variable $x$ a good.
Formula (\ref{eq:FOL-bid}) is then more precisely stated as:
\begin{equation} \label{eq:FOL-bid-sorted}
  \forall i_{\mathit{bidder}}\,.\,\neg\mathit{wins} \left( i \right) \rightarrow \mathit{pay} \left( i \right) = 0.
\end{equation}
$i$ (with the sort $\mathit{bidder}$ mentioned only at the first occurrence) can be instantiated now by terms of sort bidder, but not by those of sort good, thus reducing the search space for a proof.
Sorted formulae can be translated to unsorted formulae by converting the sorts to unary predicates (which take a single argument).

\paragraph{Higher-order logic:}
\textit{Higher-order logic} (HOL) enriches the expressiveness of FOL by extending quantification to predicates and functions.  It also allows predicates and functions to take certain\footnote{Unrestricted formula building leads to antinomies as discovered by Russell. The introduction of types imposes a hierarchy on logical objects, including predicates.  This disables circular constructs such as $X(Y) := \neg Y(Y)$, which -- when $Y$ is instantiated with $X$ -- produces the set of all $X$ for which $X \not \in X$, Russell's famous antinomy.} other predicates and functions as arguments.  For example, bids, $b$, are both a function from bidders to prices and an argument (along with $N, v$ and $A$) in the predicate
\[
  \mathit{equilibrium\_weakly\_dominant\_strategy} \ N \ v \ b \ A.
\]
Against this, HOL's calculi are not decidable, and are -- by G\"{o}del's incompleteness theorem -- incomplete.

Two common ways in which the classical logics (in particular, FOL) are augmented are, first, by the addition of set theoretical axioms and, second, by the addition of modal operators.  The first allows the approximation of higher order logic while maintaining advantages of first order logic; the second allows logic to be applied to modalities, such as knowledge, belief, or time.

Set theoretical axioms allow the definition of new symbols and operations on both predicates (e.g.\ $\in$ and $\subseteq$) and functions (e.g.\ $\cup, \cap$ and $\emptyset$).\footnote{Constants such as $\emptyset$ are considered as a special case of functions, nullary functions -- functions that do not take any argument.}  They also allow the specification of properties of sets (e.g. $a \notin X$).  Adding set theoretical axioms to FOL allows it to weakly simulate HOL: functions can be expressed as relations over $X \times X$ that are left-total and right-unique; predicates are expressed as sets.  While HOL is still more expressive than FOL augmented by set theory (e.g., FOL cannot express inductive arguments), HOL's incompleteness means that there are true statements that can be expressed in HOL but which may not have finite proofs.  As FOL augmented by set theory uses FOL, it remains complete by using FOL's complete calculus.

Modal operators -- such as `next' and `until' -- allow the consideration of \textit{modes} (or \textit{states} in economic parlance).  \textit{Linear temporal logic} (LTL) is a popular simple modal logic, modelling states in a linear fashion, thus excluding the consideration of multiple possible future states. Kamp's theorem established the equivalence of LTL with a first-order logic.  Another first-order approach to modelling states is the \textit{situation calculus}~\citep{McCarthy-Hayes-69}, which allows expression of states and the temporal development of systems in first-order logic by representing the state as an extra argument of the formulae (e.g., that agent $i$ has $\pounds 10$ in state $s_0$ can be expressed as $\mathit{has}(i,10,s_0)$).  By referring to the state absolutely, rather than in relation to other states, the problem can be expressed in standard FOL without recourse to specialized modal relations.

Our final level of distinction is the domain of the problem; this level will allow us to present concrete examples of the preceding.  Table \ref{ta:ded-log} depicts these dimensions within deductive reasoning systems.

\begin{table}[h]
\begin{center}
  \begin{tabular}{c||c|c}
     & decidable & undecidable \\
    \hline\hline
    logic            & SAT, CSP; description logic & ITP, ATP \\
    \hline
    computer system  & model checking & program verification \\
  \end{tabular}
  \caption{Mechanized reasoning using deductive logics}
  \label{ta:ded-log}
\end{center}
\end{table}

\paragraph{Decidable logic:} In Table \ref{ta:ded-log}, the \textit{decidable logic} cell refers to decidable calculi as applied to logical problems.

\textit{\href{http://en.wikipedia.org/wiki/Boolean_satisfiability_problem}{Boolean satisfiability problems} (SAT)} are among the simplest canonical problems in propositional logic.  They specify a (finite) set of statements about a (finite) set of propositional variables, and ask whether there exists an assignment of values (i.e.\ true and false) to each of those variables that simultaneously satisfies all of the statements.

In SAT problems, clauses of Boolean variables are typically expressed in \textit{\href{http://en.wikipedia.org/wiki/Conjunctive_normal_form}{conjunctive normal form}}, conjunctions ($\wedge$) of disjunctions ($\vee$) such as
\begin{equation} \label{eq:log-equiv}
  \left( \neg p \vee q \right) \wedge \left( p \vee \neg q \right);
\end{equation}
where $p$ and $q$ are Boolean variables, evaluating either to true or false.\footnote{The sentence given here is logically equivalent to $p \equiv q$, an equivalence exploited by \citet{ta-li-11} in their search for uniqueness conditions in bimatrix games.}
Revisiting the example that in auctions the non-winning player pays nothing, equation (\ref{eq:FOL-bid}) can be translated for a finite number of bidders (here, three) to a propositional logic formula,
\def\wins#1{wins#1}\def\pay#1{payZero#1}
\begin{equation} \label{eq:sat-ex-prop}
 (\mathit{wins1}\lor \mathit{\neg pays1})\land (\mathit{wins2}\lor \mathit{\neg pays2}) \land (\mathit{wins3}\lor \mathit{\neg pays3});
\end{equation}
stating for each of the three players separately that they win or pay nothing.

Any formula in propositional logic can be expressed in this form, as can any formula in first-order logic when the domain is restricted to a concrete finite domain (such as three bidders in an auction). A SAT solver is used to try to assign the variables such that all of the clauses are true.
 For instance, assigning $\mathit{wins1}$ and $\mathit{pays1}$ to $\mathit{true}$ and the other predicates to $\mathit{false}$ shows that the single formula (\ref{eq:sat-ex-prop}) is satisfiable.

SAT problems are $\mathcal{NP}$-hard \citep{kar-72}, requiring -- in the worst case -- trial of every possible input.  Thus, while the logic and calculi involved are simple, SAT problems may not be computable in practice except in small cases.  However, techniques have been developed so that SAT solvers are able to solve typical cases very quickly. One application area of SAT solvers are model checkers, as described below.

\href{http://en.wikipedia.org/wiki/Constraint_satisfaction_problem}{\textit{Constraint satisfaction problems (CSP)}} are triples, $\langle V, D, C \rangle$, where $V$ is a set of variables, $D$ their domain, and $C$ the constraint set.  In CSPs, the variables may take on more values than in Boolean satisfiability's binary assignments.  For example, an $\mathit{hours}$ variable might take one of twelve values.  While apparently richer, CSPs can be reduced to SATs by suitable definition of additional auxiliary variables.\footnote{See \citet{bo-ha-zh-06} for a comparison of SAT and constraint programming.} 

The third example of decidable calculi applied to logical problems that we consider are \textit{description logics}.  These are central to automated reasoning about concept hierarchies in classification (or ontological) tasks.  One of their most important applications is to the \href{https://en.wikipedia.org/wiki/Semantic_Web}{\textit{semantic web}}, which allows computers to extract semantic information from web pages. As a simple example, semantically enabled web searches could recognize that $x^2 + y^2 = z^2$ and $a = \sqrt{c^2 - b^2}$ were both statements of Pythagoras' theorem.\footnote{See \citet{Lange:OntoLangMathSemWeb} for a more in-depth discussion of applications of semantic web technology to mathematics.}

\comment{Simple logics of this type
deal with simple combinations of concepts such as union or
intersection of concepts. More advanced logics allow for more involved
definitions such as a $\mathit{isFather}(q)$ means $\exists p . \mathit{father}(q,p)$. In the field it has been studied in great depth what can
be added in expressive power without sacrificing decidability. Also there is a good understand of the complexities for answering certain questions in different logics (for instance, assume that $\mathit{isMother}(q)$ and $\mathit{isParent}(q)$ are defined analogously, what is the time complexity for answering the question whether
$\mathit{isParent}(q)$ is equivalent to $\mathit{isFather}(q)\lor \mathit{isMother}(q)$).}


\paragraph{Model checking:}
Model checking~\citep{Clarke-et-al.86,Clarke-et-al.94} builds finite models
to describe computer hardware systems or simple software systems and then tests their properties. Typical questions include whether certain states of the system can be reached, or whether information is flowing properly through a circuit design.

Such models are typically expressed as \textit{finite automata}.  A finite automaton can model either a finite system or an infinite system if abstraction allows the infinite state space to be simplified to a finite one.\footnote{For example, in proofs involving real numbers, it may suffice to reduce an infinite number of possible values -- which cannot be handled by a decidable calculus -- to a trinary partition defined by $>$, $<$ and $=$.  See \citet{BCMDH-90} for an application to large, complex microprocessor circuits.}  Then the model is systematically checked for desired properties, e.g.\ by using SAT solvers. Viewing digital computer chips as a set of Boolean statements allows them to be modeled as \textit{decidable computer systems} allowing, in turn, SAT solvers to automatically verify their properties.  Since the mid-1990s, Intel has used formal methods to formally prove properties like `this chip implements the IEEE division standard' following an embarrassing and costly recall of a Pentium chip that was discovered not to properly implement IEEE floating point division \citep{harrison-sfm}.  No further such problems have been reported since then.\footnote{With chip design becoming more and more sophisticated, the reasoning in the  verification needed to become also more sophisticated.  Thus, HOL theorem provers such as HOL-Light are now also used for hardware verification.}


\paragraph{Undecidable logic: } The upper right cell in Table \ref{ta:ded-log} refers to the application of undecidable calculi to logical problems.  The two types of mechanized reasoning mentioned here, \textit{interactive theorem proving} (\textit{ITP}) and \textit{automated theorem proving} (\textit{ATP}) have traditionally been equated with theorem proving, but seen as distinct, with the former involving more steering from a human user than the latter.  Stereotypically, an ITP system could check an existing proof, while an ATP system could suggest steps in a proof or, in some cases, a whole proof.  In practice, the distinction between the two has decreased, with ITP systems implementing ATP procedures.\footnote{\citet{harrison-ab} noted that ITP may be preferred to ATP, as -- in working more closely alongside human reasoning -- it may be better at developing human understanding.}

The traditional identification of theorem proving with work in these areas owes partly to some high profile successes in pure mathematics, the focus of the most hope in mechanized reasoning's early days.  The earliest major success was -- as might be expected in an emerging field -- not even a clear example of mechanized reasoning: in the 1970s, computers were used to carry out the exhaustive computations required to prove the four-color map theorem \citep[q.v.][]{ap-ha-77,ap-ha-ko-77}.  Here, the computers were used to perform simple (algebraic) calculations, rather than to (logically) `reason'.  More recently, mechanized proof checkers have confirmed these results formally \citep[q.v.][]{gon-08}.\footnote{\citeauthor{gon-08}'s team has now also formally checked the Feit-Thompson Odd Order Theorem \citep{gonthier:hal-00816699}.}

The first major mathematical result to be established by mechanized reasoning -- rather than `mere' calculation -- was Robbins' conjecture that two bases for Boolean algebras are equivalent.  While appearing to be a beguilingly simple problem, it remained unresolved for 60 years, becoming a favourite of Tarski, who set it as an open problem \citep[q.v.][p.~245]{he-mo-ta-71}.  One of the complicating factors of the conjecture was that the only known example of a Robbins algebra was also a Boolean algebra, reducing the evidence base that mathematicians could use to form intuitions about the problem.  Nonetheless, in the late 1990s, \citet{mcc-97} was able to pose the problem in a way that allowed EQP, an automated theorem prover related to his well-known Otter prover, to generate -- not just check -- a 17-step proof, later reduced to eight steps \citep{mcc-97}.\footnote{\citet{dah-98} manually reworked EQP's proof to provide a more human-readable proof.}

Perhaps the highest profile success of mechanized reasoning in pure mathematics is the solution to Kepler's conjecture that there is no denser packing of spheres in $\mathbb{R}^3$ than the face-centred cubic.  \citeauthor{hal-05}' original proof was 120 pages long (excluding computer code that exceeded 500MB), requiring a team of 12 referees five years to become ``99\% certain'' that it was correct.  Unsatisfied with this standard, \citeauthor{hal-05} founded \textit{Project Flyspeck} to establish a fully formal proof of the conjecture \citep{Hales:DenseSpherePackings12}.  In August 2014, the project was completed \citep{hales2015formal}, close to \citeauthor{hal-05}'s original estimate of 20 person-years \citep{av-ha-14}.

More mundanely, ITP has been used to translate existing human proofs into formal proofs that are sufficiently detailed that a computer can mechanically verify them: as of January 2016, 91 of the \href{http://www.cs.ru.nl/~freek/100/}{`top 100' mathematical theorems} on a list maintained by \citet{wie-xx} had been formalized.\footnote{Exceptions include Fermat's last theorem.}  While most of these are considerably less spectacular than the examples cited above -- in which theorem provers have been used to help convince mathematicians as to the validity of major, new results -- the gradual accretion of small proof libraries builds a foundation for applying ATPs more widely.

The distinction between high-profile, major theorems and lower-profile bodies of theory has been suggested as a reason that ATP has yet to fulfil its early hopes: \citet{Buchberger06} noted that human mathematicians typically do not try to prove isolated theorems but explore a whole theory, thereby building up valuable intuition which helps them in proving related theorems.  Additionally, \citet{Newell81} stated that standard theorem proving techniques -- while often highly efficient -- do not make use of advanced human approaches (as described in \citeauthor{Polya45}'s books) such as simplifying a problem to one they can solve; applying the simplified solution to the original problem may still be very hard, but the intuition gained by solving the simplified problem may help solve the original problem.\footnote{Conversely, \citet{dic-11} observed that the `resolution' inference rule \citep{rob-65}, central to mechanized reasoning, ``was not based on any known human practice and was in fact difficult and counterintuitive for humans to understand''.  Indeed, reviewing mechanized reasoning since resolution, \citeauthor{rob-65} lamented that it may have harmed mechanized reasoning by contributing to a parting of ways between human mathematicians and mechanized reasoners \citep{dic-15}.}

\paragraph{Program verification} Table \ref{ta:ded-log}'s lower right cell corresponds to software engineering's \textit{program verification}, reasoning about software systems.  This can be highly complex in the case of complex programs.  Within program verification, traditional proof approaches have sought to prove that the software correctly implements properties specified in the design brief.  As such proofs are very costly, full correctness proofs that seek to verify all desired properties of the code, are done only for `mission critical' systems~\citep{Silva-et-al.08}.

Some well known examples of program verification have come from transport and finance: in code controlling automated commuter rail systems, theorems that no two trains occupy the same location at the same time have been proved; within financial transactions software, theorems that transactions do not create or destroy value, but merely transfer it, have also been proved~\citep{wo-la-bi-fi-09}.  More recently, a compiler for the C programming language has been formally verified \citep{boldo:hal-00743090}.  These techniques are becoming more mainstream: in 2013, Facebook acquired Monoidics, a start-up firm applying theorem proving to software code analysis; in 2015, another start-up, Aesthetic Integration beat 600 competitors to win first prize in UBS' Future of Finance Challenge for its ability to automatically prove failure or compliance in financial algorithms.\footnote{Their entry formally defined a UBS `dark pool' and a set of SEC regulations which the SEC had found the dark pool in breach of.  Aesthetic Integration was able not only to verify the dark pool failure found by the SEC, but discovered that its order prioritization failed to satisfy transitivity \citep{ih-pa-15}.}

Historically, program verification has been conducted as a \textit{post mortem}: given existing code, program verification determines whether or not it is correct.  More recently, \textit{code extraction} techniques have been developed to generate code that provably implements the desired properties.

\subsection{Inductive reasoning}

As noted above, both inductive and deductive reasoning date back at least to Aristotle, but the former is not sound, while the latter has been the focus of the mechanized reasoning community.  The distinction between the two -- as well as the utility of each -- was expressed by \citet[p.~vi]{Polya54}, who referred to deductive reasoning as \textit{demonstrative reasoning}, and inductive reasoning as \textit{plausible reasoning}:
\begin{quote} 
  We secure our mathematical knowledge by {\em demonstrative
    reasoning}, but we support our conjectures by {\em plausible
    reasoning\/} \dots Demonstrative reasoning is safe, beyond
  controversy, and final. Plausible reasoning is hazardous,
  controversial, and provisional. \dots\\
  In strict reasoning the
  principal thing is to distinguish a proof from a guess, a valid
  demonstration from an invalid attempt. In plausible reasoning the
  principal thing is to distinguish a guess from a guess, a more
  reasonable guess form a less reasonable guess. $\ldots$ [plausible
  reasoning] is the kind of reasoning on which [a mathematician's]
  creative work will depend. 
\end{quote}

Inductive systems seek to derive general statements based on a finite number of statements (e.g.\ if $A_1$ is true, and $A_2$ is true, and so on up to $A_N$ for some finite $N$, then $A_n$ is true for all natural numbers $n$).\footnote{Inductive reasoning is distinct from mathematical induction, which involves proving $A_0$ and that $A_{n+1}$ is true given $A_n$. Mathematical induction is a sound \textit{deductive} method.}  This sort of reasoning is immediately familiar to us when we reflect on how we form conjectures: we expect the sun to rise tomorrow without any understanding of astrophysics; this expectation, though, may lead to the formation of conjectures about astrophysics.  However compelling the weight of evidence, inductive reasoning is not sound -- as may be demonstrated by single counterexamples.  In number theory, \href{https://en.wikipedia.org/wiki/Euler\%27s_sum_of_powers_conjecture}{Euler's attempted generalization of Fermat's last theorem} remained open for two centuries until a computer found a counterexample.\footnote{Euler's conjecture states: let $n$ and $k$ be integers greater than one, and let $a_1, \ldots a_n$ and $b$ be non-zero integers; then $\left( \sum_{i=1}^n a_i^k = b^k \right) \Rightarrow \left( n \ge k \right)$.  The first known counterexample, found by computer, is $27^5 + 84^5 + 110^5 + 133^5 = 144^5$ \citep{la-pa-66}.}  In game theory, \citeauthor{vN-M} conjectured that stable sets (`solutions' in their parlance) always existed; it took almost a quarter-century for counterexamples to be found \citep{luc-68}.

Inductive reasoning may be used for \textit{theorem discovery}, whereby regularities in observed data are used to form conjectures to test.\footnote{One of the most dynamic subfields of AI currently is \textit{machine learning}.  Some definitions are agnostic as to how the machines learn -- e.g.\ whether deductively or inductively -- while, perhaps more typically, others link machine learning more closely to inductive reasoning.  Some of the highest profile applications of machine learning are statistical, positing rules that fit the existing data well, rather than perfectly.}  \comment{Famously, \citet{Polya54} described how mathematical concepts and relationships can be found by extrapolating beyond the actual definitions and restrictions, that is, by performing an initially unsound generalization. He gave as an example Euler's generalization from polynomials to infinite power series and the analogy of representing the zeros of a polynomial in a product $a_n(x-\alpha_1)(x-\alpha_2)\cdots(x-\alpha_n)$ on the one hand and those of an infinite power series in a product $(1-{x^2\over \pi^2})(1-{x^2\over {4\pi^2}})(1-{x^2\over {9\pi^2}})\cdots$ on the other hand. The latter has exactly the roots $\pi, -\pi, 2\pi, -2\pi, 3\pi, -3\pi, \ldots$ hence corresponds to $\tfrac{\sin x}{x}$ \citep[p.18]{Polya54}. Although not warranted by the theory developed so far, the generalization from finite polynomials to infinite series generates a good hypothesis.} \todo{CR@MK: I've cut the Polya example as -- while I like it -- I still don't understand it, and don't think it central to an intro to mech reasoning.\\ MK@CR: I am not sure, do you want to replace it, or leave it out? I re-read the paragraph in Polya's book, I don't think that there is much more to it.}

Mechanized inductive reasoning dates back to two systems built in the 1970s and 1980s to discover new conjectures, AM (Automated Mathematician)~\citep{Lenat76} and Eurisko~\citep{Lenat83}. These were able to detect conjectures such as the unique prime factorization theorem and Goldbach's conjecture.\footnote{The prime factorization theorem states that any positive integer has a unique decomposition as the product of primes.  Goldbach's conjecture states that every even integer beyond two can be expressed as the sum of two primes.} \comment{The AM system has been criticized because of the close link between the system and its implementation language Lisp. For instance, the concept of a natural number was discovered by AM through the discovery that some lists share a common property, namely their lengths. This concept was then called a natural number. This step was criticized as not genuine since the step would be easy since the concept of a list was already built in the system, and hence the statement that AM discovered natural numbers would not be warranted.} \todo{CR@MK: I've cut the AM/Lisp criticism as it doesn't make sense to me, and doesn't seem important to understanding what follows.} \comment{In contrast Eurisko used explicit knowledge about the domain and about heuristics in form of rules that speak about the relevance of concepts and potential relationships between them.}
\todo{CR@MK: ditto.}
The systems use certain measures of interestingness for concepts. For instance, concepts that are always true or always false are not interesting. However, if a concept is true for a significant proportion of examples (such as divisibility by only 1 and the number itself) then this is considered as an interesting concept (`primality' for divisivility by only 1 and the number itself).\footnote{\citeauthor{dic-11}'s case study of the Argonne National Laboratory's AURA system noted that, while ``the capacity to identify what was `promising' or `interesting' was precisely one of those unautomatable human abilities \ldots the Argonne practitioners decided what was important on the basis of extensive experimenting with AURA.''}

Lenat's work was continued by Colton in the HR (Hardy-Ramanujan) system~\citep{colton99}, where more advanced measures for interestingness were developed. For instance,
\begin{quote}
  The novelty measure of a concept calculates how many times the categorisation produced by the concept has been seen. For example, square numbers categorise integers into two sets: $\{1,4, 9, \ldots\}$ and $\{2, 3,5,\ldots\}$. If this categorisation had been seen often, square numbers would score poorly for novelty, and vice-versa. \citep{colton00}.
\end{quote}
Another important advance in Colton's work is that the HR system weeds out simple conjectures, namely those that can be easily verified or falsified by automated theorem provers.\footnote{See also the introduction of \citet{ta-li-11} for a brief review of the history of mechanized theorem discovery; a lengthier review is available in \citet{tan-10}.} One of the successes of HR was that it invented the concept of `integers with a square number of divisors' which was added to Sloane's Encyclopedia of Integer Sequences.\footnote{\url{https://oeis.org/}}

\section{Mechanized reasoning for economic problems} \label{se:amre}

Over the past decade, computer scientists have become interested in economic problems -- often publishing economically novel and interesting results, but almost entirely within the computer science literature.  This section reviews that literature, focusing on the applications to social choice and auction theory.  We structure this survey primarily according to the problem domain within economics, and only secondarily according to our classificatory scheme, in order to focus on the insights into economic problems made possible by these techniques, rather than the techniques themselves.

Table \ref{ta:app-ded-log} places the papers reviewed in this section into our original classificatory scheme.  This classification is imperfect. For example, \citet{ta-li-09} and \citet{ge-en-11} both used propositional logic solvers (and, therefore, deductive reasoning), but used them to discover new results -- which we have associated, above, with inductive reasoning.  Papers like this therefore span historical distinctions.

\comment{However, there is an important difference, the discovery of a relationship such as Goldbach's conjecture in a system such as Eurisko or HR does not say anything about the correctness of the conjecture (and indeed Goldbach's conjecture is at the time of writing this article still an open problem). By contrast, in \citet{ta-li-09} and \citet{ge-en-11}, the correctness of the newly established theorems is guaranteed: the computer verifies the computationally challenging base case and the hand-written inductive argument generalizes the relationship for arbitrary other values.}

Social choice has been mechanized reasoning's main point of contact with economics, making it a convenient lens for illustrating mechanized reasoning.  Auction theory is, we feel, promising as a new point of contact between mechanized reasoning and economics, due both to the technical parallels between social choice (where mechanized reasoning has proved fruitful) and mechanism design (q.v.\ \citet{ren-01}), and to auctions' importance as allocation mechanisms.

\begin{table}[h]
\begin{center}
  \begin{tabular}{p{0.12\linewidth}||p{0.39\linewidth}|p{0.39\linewidth}}
     & decidable & undecidable \\
    \hline\hline
    \multirow{3}{*}{logic} & \citet{ge-en-11}, \citet{br-ge-15}: SAT & \citet{nip-09}, \citet{wie-07}, \citet{wie-09}, \citet{LangeEtAl:CompProvAuctThy13}: ITP \\
                           & \citet{ta-li-09}: SAT, CSP & \citet{gr-en-jpl}: ATP \\
                           & \citet{BaiTadjouddineGuo2014}: description logic & \\
    \hline
    computer system  & \citet{xu-ch-07}, \citet{ar-es-no-ra-si-05}, \citet{ta-gu-va-09} 
    : model checking & \citet{CKLR:SASI-EC15}: code extraction \\
  \end{tabular}
  \caption{Some applications of mechanized reasoning to economic problems}
  \label{ta:app-ded-log}
\end{center}
\end{table}

\subsection{Social choice}\label{sec:socialChoice}

\citeauthor{gea-05}' three brief and distinct proofs of Arrow's impossibility theorem -- that, for three or more alternatives and a finite set of agents, there is no social choice rule satisfying unanimity (\textit{UA}), independence of irrelevant alternatives (\textit{IIA}) and non-dictatorship (\textit{ND}) -- served as the mechanized reasoning community's entr\'{e}e to economic problems: social choice was novel to this community, yet used familiar structures -- particularly linear orders -- and the three proofs by \citet{gea-05} gave the mechanized reasoning community an opportunity to attempt to compare the relative difficulty of encoding those proofs for computers.

One primitive measure of the relative difficulty of formal proofs is to compare their size to that of human proofs.\footnote{The easiest way of determining the size of a formal proof is by counting lines of source code.  In Section~\ref{se:mpv} we discuss a less biased measure, the de Bruijn factor.}  Table \ref{ta:gea-proofs} reports on the relative sizes of \citeauthor{nip-09}'s proofs in Isabelle -- a higher-order logic theorem prover -- and \citeauthor{wie-09}'s proof\footnote{\citeauthor{wie-09} justified his decision to formalize only \citeauthor{gea-05}' first proof by noting that they became successively more abstract, making the first the most challenging as, generally ``abstract mathematics is easier to formalize than concrete mathematics'' \citep{wie-09}.} in Mizar -- a set theoretic proof checker, which augments first-order logic by the axioms of Tarski--Grothendieck set theory.\footnote{The advantage of Tarski--Grothendieck set theory over Zermelo-Fraenkel is that the former only requires finitely many axioms to axiomatize sets.} \citet{nip-09} attributed the greater length of the Mizar proofs to Isabelle's ``higher level of automation'' -- something to which we return in our Isabelle proof of Vickrey's theorem.

\begin{table}[h]
  \centering
  \begin{tabular}{lll}
    & $1^{\text{st}}$ proof & $3^{\text{rd}}$ proof \\
    \midrule
    Paper \citep{gea-05} & 1 page & 1 page \\
    Isabelle \citep{nip-09} & 350 lines (6 pages) & 300 lines \\
    Mizar \citep{wie-07,wie-09} & 1100 lines & \\
  \end{tabular}
  \caption{Relative lengths of human and machine proofs of Arrow's theorem}
  \label{ta:gea-proofs}
\end{table}

\citeauthor{nip-09}'s formalization attempts began with \citet{gea-01}, a working paper that preceded the published version \citep{gea-05}.  In seeking to formalize the first proof, he discovered a statement in one of the lemmas that required a 20 line auxiliary proof to properly establish.  Further, a relationship between a pivotal voter and a dictator only ``hinted at'' in the original text required elaboration.  \citeauthor{nip-09} did not discover any errors in this first proof.  Similarly, \citet{wie-09} reported on missing cases, but no ``real errors''.

As to the third proof, \citeauthor{nip-09} found two instances of omitted material in its central lemma, preventing him from formalizing the proof.  \citeauthor{nip-09} presented these concerns to \citeauthor{gea-01} by e-mail; both concerns were resolved in \citet{gea-05}.\footnote{Mechanized reasoning can identify omissions by forcing close scrutiny.  This, of course, is also possible without mechanical support.  For example, in the matching literature, \citet{ay-so-13} identified a hidden assumption in \citet{ha-mi-05} -- which they view as ``widely considered to be one of the most important advances of the last two decades in matching theory'' -- without which many of their results fail to hold.  The oversight arose from ``an ambiguity in setting the primitives of the model''.  This ambiguity would likely have been detected by a mechanized reasoner as well.}

Both \citeauthor{nip-09} and \citeauthor{wie-09}'s proofs were written by the authors themselves, and are therefore examples of ITP.  By contrast, \citet{gr-en-jpl} sought to, first, restate Arrow's theory in FOL and, then, to automatically generate a proof for it.\footnote{\citet{gr-en-jpl} is also a good guide to related work on formalizing results in social choice.}  Expressing Arrow's theory in FOL presented  the challenge that quantifying over all possible linear orders of agents' preference profiles appears to be a second-order quantification as it involves quantifying over agents, alternatives, and the agents' preference profiles. \citeauthor{gr-en-jpl} addressed this by adopting the approach taken in \citet{ta-li-09}, namely to apply the situation calculus (mentioned in section~\ref{subsection:deductive-reasoning}) for the representation. Thus, they could present a first-order formalization of the requisite axioms, $T_{ARROW}$, allowing them to restate Arrow's theorem as:
\begin{theorem}[Arrow \`{a} la \citet{gr-en-jpl}]
  $T_{ARROW}$ has no finite models.
\end{theorem}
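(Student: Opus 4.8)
The plan is to read a finite model of $T_{ARROW}$ back into the ordinary language of social choice and then contradict Arrow's theorem. Because the situation-calculus encoding represents voters, alternatives and preference profiles as objects of the domain, any \emph{finite} model $\mathcal{M} \models T_{ARROW}$ interprets these as finite sets and interprets the aggregation symbol as a genuine social welfare function $F$ on a finite electorate $N$ over a set $A$ of at least three alternatives. The axioms of $T_{ARROW}$ force $F$ to satisfy unanimity, independence of irrelevant alternatives (IIA), and non-dictatorship. Thus ``no finite models'' is exactly the assertion of Arrow's impossibility theorem, and it suffices to show that no such $F$ can exist. First I would make this translation precise, checking that each axiom of $T_{ARROW}$ corresponds to the property it is meant to encode, so that a finite model really does deliver an Arrovian aggregator rather than a degenerate structure.

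With $F$ in hand I would reprove Arrow by the pivotal-voter (decisive-coalition) argument of \citet{bar-80} underlying \citeauthor{gea-05}' proofs. Call a coalition $C \subseteq N$ \emph{decisive} for an ordered pair of alternatives if, whenever every member of $C$ ranks the first above the second, so does $F$. Unanimity makes the grand coalition $N$ decisive for every pair. Using IIA one proves a contagion (field-expansion) lemma: a coalition decisive for one pair is decisive for all pairs. The crucial step, which is where three alternatives are needed, is a splitting lemma: introducing a third alternative and partitioning a decisive coalition of size at least two, IIA forces one of the two parts to be decisive. Iterating this splitting shrinks decisive coalitions, and a finite descent on $|C|$ reaches a singleton $\{i\}$ that is decisive for every pair, i.e.\ a dictator, contradicting the non-dictatorship axiom.

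The role of finiteness, and the main obstacle, now become visible. The descent terminates only because $N$ is finite; for an infinite electorate the decisive coalitions form a free ultrafilter and need not collapse to a singleton, so dictator-free aggregators exist and $T_{ARROW}$ does have (infinite) models. This is precisely why the theorem is phrased as the absence of \emph{finite} models, and it is also the reason a pure first-order prover cannot settle the claim in one pass: neither finiteness nor the induction driving the descent is expressible in FOL, and ``no finite models'' is a meta-level statement about $\mathcal{M}$ rather than a formula derivable from $T_{ARROW}$ alone. Accordingly, I expect the hard part to be mechanizing the descent. The route I would take mirrors \citet{ta-li-09}: discharge a base case of small electorate size (with $|A| = 3$) by finite model finding or a SAT encoding, and use the contagion and splitting lemmas as the inductive step of a manual induction on $|N|$, so that the automated component handles the finite, decidable base case while the induction supplies the finiteness that FOL itself cannot.
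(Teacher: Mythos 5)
Your proposal is correct and takes essentially the same route as the paper, which offers no independent proof of this statement: it treats the claim exactly as you do, as a direct first-order (situation-calculus) restatement of Arrow's impossibility theorem whose truth is inherited from the classical result, and your decisive-coalition contagion-and-splitting descent is a standard valid proof of that underlying theorem. Your diagnosis of why finiteness is essential — the descent on $|C|$, the free-ultrafilter aggregators on infinite electorates, and the inexpressibility of finitude and induction in FOL — matches the paper's own explanation of why \citet{gr-en-jpl} needed a separate formulation for each $\left| N \right|$ and failed to automate the proof beyond minimal cases, and why the \citet{ta-li-09} exhaust-then-induct strategy you sketch is the workable mechanization.
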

A model in this sense is an instantiation (or example) of the variables used in the theory.  For Arrow's theorem, the variables include $N$ (the set of agents), $A$ (the set of alternatives), the set of the agents' preference profiles, and the set of social welfare functions (SWFs) mapping from such profiles to a social preference.  In the two-agent, three-alternative case, that $T_{ARROW}$ ``has no finite models'' means that none of the $6^{36}$ possible SWFs satisfy the theory's axioms.\footnote{There are a total of $36$ preference profiles in the domain, and six orders in the range, yielding a total of $\prod_{i=1}^{36} 6$.}  The theorem claims this property for any finite number of agents, and any finite number of alternatives in excess of three.

FOL's completeness allows any property of the system to be explicitly derived.  However, the second problem with FOL encountered by \citeauthor{gr-en-jpl} is that FOL is unable to express finitude, for the same reason that it cannot express induction: intuitively, HOL defines finitude by considering the complement of the infinite, which it can define by induction on the natural numbers.  Thus, formulating Arrow's Theorem in FOL requires a separate formulation for each $\left| N \right|$.  Similarly, proofs of Arrow's theorem in FOL may differ for each $\left| N \right|$.  Thus, \citeauthor{gr-en-jpl}' attempts to use a first-order theorem prover to automatically generate proofs of Arrow's theorem failed outside of minimal cases.\footnote{They used Prover9, a successor to Otter, and -- therefore -- a close relative of the system that found the proof of Robbins' conjecture \citep{mcc-97}.}

Independently of \citeauthor{gea-05}' proofs, \citet{suz-00} had presented an induction proof of Arrow's impossibility theorem for a base case of two agents and $\left| A \right|$ alternatives; an induction result then demonstrated its truth in general.  This motivated \citet{ta-li-09} to manually derive a second induction result in the number of agents.  Proving the impossibility in a two-agent, three-alternative base case, would -- by their two induction lemmas -- cause it to hold in general.  They computationally exhausted this base case in two different ways.

First, they expressed the problem as a Boolean SAT problem.  \citeauthor{ta-li-09} then used the situation calculus, which allows many of the problem's symmetries to be efficiently dealt with by the action of swapping arguments, to reduce the number of variables needed in the base case to $35{,}973$ in $106{,}354$ clauses. These are too many cases to check manually.  However, using the SAT solver Chaff2 they could show the inconsistency between the three basic axioms in less than a second on a desktop computer.

Second, \citeauthor{ta-li-09} expressed the problem as a CSP, in which $V$, the set of variables, consists -- in their base case -- of $36$ preference profiles; $D$, their domain, of six linear orderings for each profile; and $C$, their constraint set, of the \textit{UN} and \textit{IIA} axioms.  As the base case implies $6^{36} \approx 10^{28}$ possible SWFs -- far too many to be feasibly generated -- the authors used the (first-order) logical programming language \href{http://en.wikipedia.org/wiki/SWI-Prolog}{Prolog} to generate all SWF satisfying the constraints of \textit{UN} and \textit{IIA}.\comment{\footnote{Prolog uses a restricted first-order language which allows for its efficient execution. Expressions such as $p(x)\leftarrow q(x)\land r(x)$ can be used in order to establish queries such as $p(a)$ by instantiating the variable $x$ by the constant $a$ and then trying to establish $q(a)$ and $r(a)$ separately.  A fixed execution order makes the logic expressions then computational. Expressions that introduce choice in the computation such as $p(x)\leftarrow q(x)\lor r(x)$ are not allowed in Prolog.}} Running in less than a second on a desktop computer, their Prolog code generated two SWFs, both of which were also dictatorial.

A similar approach yielded the Muller-Satterthwaite theorem, and Sen's Paretian liberal result, among others.\footnote{See \citet{gei-10} for a more complete list.}

When implementing the CSP, the authors noticed that imposing even just the \textit{IIA} constraint reduced the set of SWFs from $6^{36}$ to 94.  By inspecting these manually, \citet{ta-li-09} posited a new theorem that implies both Arrow's and Wilson's.  Before stating it, note that a social order is \textit{inversely dictatorial} if it ranks elements in the opposite way to at least one agent; the \textit{Kendall tau} distance between two orderings is the number of pairs on which they disagree.  Then:

\begin{theorem}[\citet{ta-li-09}]
  If a social welfare function $W$ on $\left(N, A \right)$ satisfies \textit{IIA}, then for every subset $Y$ of $A$ such that $\left| Y \right| = 3$,
  \begin{enumerate}
    \item $W_Y$ is dictatorial, or
    \item $W_Y$ is inversely dictatorial, or
    \item \label{it:Ktd} The range of $W_Y$ has at most 2 elements, whose [Kendall tau] distance is at most 1.
  \end{enumerate}
\end{theorem}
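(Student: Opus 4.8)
The plan is to fix a three-element subset $Y=\{a,b,c\}$ and classify every social welfare function on $Y$ that is compatible with IIA, exploiting that IIA lets me treat the three pairs $\{a,b\}$, $\{b,c\}$, $\{a,c\}$ almost separately. First I would reduce to the three-alternative case: since $W$ satisfies IIA and (as is standard in Arrow's framework) has universal domain, $W_Y$ depends only on the individual rankings restricted to $Y$, and as a profile ranges over all linear orders of $A$ its restriction ranges over all linear orders of $Y$. So $W_Y$ is itself an IIA social welfare function on $Y$, and it suffices to classify these. For each ordered pair $(x,y)$ of distinct elements of $Y$, IIA says society's ranking of $x$ against $y$ depends only on the coalition $S_{xy}=\{i : x \succ_i y\}$. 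I would therefore encode $W_Y$ by three families of winning coalitions $\mathcal{W}_{ab},\mathcal{W}_{bc},\mathcal{W}_{ac}\subseteq 2^N$, with $x\succ y$ socially exactly when $S_{xy}\in\mathcal{W}_{xy}$; completeness and antisymmetry of the output give the duality $S\in\mathcal{W}_{xy} \iff (N\setminus S)\notin\mathcal{W}_{yx}$.

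The substance comes from transitivity. Because each individual ranking is a linear order, no voter exhibits a cyclic pattern on $a,b,c$, which forces the realizable coalition triples to satisfy $S_{ab}\cap S_{bc}\subseteq S_{ac}\subseteq S_{ab}\cup S_{bc}$; under universal domain every such triple actually occurs. Demanding that the \emph{social} output be acyclic as well then yields the master constraint: for all coalitions $P,Q,R$ with $P\cap Q\subseteq R\subseteq P\cup Q$, if $P\in\mathcal{W}_{ab}$ and $Q\in\mathcal{W}_{bc}$ then $R\in\mathcal{W}_{ac}$, and if $R\in\mathcal{W}_{ac}$ then $P\in\mathcal{W}_{ab}$ or $Q\in\mathcal{W}_{bc}$ (together with the analogous constraints obtained by permuting $a,b,c$). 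Classifying the admissible triples of families is now a purely combinatorial problem.

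I would split on how many pairs are \emph{imposed}, i.e.\ have a constant family ($\mathcal{W}_{xy}=\emptyset$ or $2^N$). If two or more pairs are imposed, the remaining pair can take at most two values, so $W_Y$ has range of size at most two; the two possible orders agree on the two imposed pairs and differ only on the third, hence lie at Kendall tau distance $1$ — this is alternative (3) (a short check shows that imposing two \emph{chained} pairs forces the third, collapsing to the constant case). The complementary case, at least two pairs non-imposed, is where the work lies: I would run a field-expansion argument showing that non-imposedness of two pairs, pushed through the master constraint, forces \emph{neutrality} — a single family $\mathcal{U}$ with $x\succ y \iff S_{xy}\in\mathcal{U}$ for every pair, or the reversed rule — and then forces $\mathcal{U}$ to be closed under supersets and finite intersections and to contain exactly one of $S,N\setminus S$; that is, $\mathcal{U}$ is an ultrafilter. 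As $N$ is finite every ultrafilter is principal, $\mathcal{U}=\{S : i_0\in S\}$, making $i_0$ a dictator (alternative (1)) or, under the reversed rule, an inverse dictator (alternative (2)).

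The hard part is exactly this field-expansion/neutrality step: extracting monotonicity and the ``the same coalitions decide every pair'' conclusion from the two transitivity implications \emph{without} the Pareto condition that drives the classical Arrow argument — and it is the absence of Pareto that lets the imposed alternative (3) survive. An attractive alternative, mirroring the computational origin of the result, is to establish the base case $|N|=2$ by exhausting the finitely many IIA rules on a triple (the $94$ functions noted in the text) and then to induct on $|N|$ by cloning an agent; in that route the inductive step becomes the main obstacle instead.
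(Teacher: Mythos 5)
Your route is genuinely different from the one the paper describes. The paper's proof of this theorem is the exhaust-then-induct method: a SAT/CSP search enumerates all $94$ IIA-compatible social welfare functions in the two-agent, three-alternative base case and verifies that they split into $84$ imposed rules, $6$ constants, $2$ dictators and $2$ inverse dictators; two manually derived induction lemmas -- one on the number of alternatives, following Suzumura, and one on the number of agents, derived by Tang and Lin -- then lift the base case to arbitrary finite $N$ and $A$. That is exactly the ``attractive alternative'' you relegate to your final sentence. Your primary route is instead a direct, Wilson-style decisive-coalition argument, and up to a point it is sound: the reduction to an IIA rule on the triple $Y$, the encoding by three families of winning coalitions with the duality $S \in \mathcal{W}_{xy} \Leftrightarrow N \setminus S \notin \mathcal{W}_{yx}$, the characterization of realizable coalition triples by $S_{ab} \cap S_{bc} \subseteq S_{ac} \subseteq S_{ab} \cup S_{bc}$, the resulting two-sided transitivity constraint, and the disposal of the case of two or more imposed pairs (including the remark that two chained imposed pairs force the third) are all correct. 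If completed, this route would buy something the paper's method does not: a human-readable proof that neither outsources the base case to a solver nor needs the agent-cloning induction.

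The gap is the step you yourself flag: the claim that whenever at least two pairs are non-imposed, the master constraint forces neutrality and ultimately an ultrafilter of decisive coalitions. As written this is an assertion, not a proof, and it is precisely where all of the difficulty of a Pareto-free Arrow argument is concentrated. In particular, your dichotomy leaves the configuration ``exactly one pair imposed, two pairs free'' (say $a \succ b$ socially always, with both $\left\{ a,c \right\}$ and $\left\{ b,c \right\}$ non-imposed) to the neutrality lemma: such a rule is neither dictatorial nor inversely dictatorial, and its range would have to contain two orders at Kendall tau distance $2$, so the lemma must show this configuration is outright impossible rather than folding it into alternative (3). Until that field-expansion step is written out -- essentially re-proving Wilson's theorem without citizen sovereignty, in the spirit of the Malawski--Zhou result the paper's footnote alludes to -- the proposal establishes only the easy half of the trichotomy. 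The computational fallback you mention would close the base case, but you would then still owe the induction lemma on $\left| N \right|$, which the paper attributes to Tang and Lin's manual derivation and which is not routine either.
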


As an example of an SWF accepted under condition \ref{it:Ktd} of theorem, consider the function that always prefers the first alternative to the second, always prefers the first to the third, and prefers the second to the third alternative unless both agents prefer the third to the second. This is neither dictatorial nor inversely dictatorial: the agents' preferences for the first item are ignored; there are only two elements in its range (e.g.\ $a \succ b \succ c$ and $a \succ c \succ b$), the distance between which is one.\footnote{Represent preferences over three objects as a three-digit binary character, the first indicating whether $a \succ b$, the second whether $a \succ c$ and the third whether $b \succ c$.  There are six permissible three digit numbers, $000, 001, 011, 100, 110$ and $111$, after eliminating the two cyclical ones.  \textit{IIA} then requires that each digit in the social preference is a function of the corresponding digits in the individual preferences alone.  The 1-distance condition then allows only one of those digits to vary.}  As \citeauthor{ta-li-09} noted, the third case of their result violates Arrow's original non-imposition axiom, which requires that the SWF be surjective, mapping to every possible value in its range.

Of the 94 SWFs satisfying \textit{IIA}, there are 84 of the sort described above, 6 constant SWFs (one for each ordering), two dictatorial functions, and two inversely dictatorial functions.

As before, the theorem is established by exhaustive computation on the two-agent, three-alternative base case, and then extended to arbitrary finite domains by the manually-derived induction lemmas.  \citet{ch-se-14} observed that, as far as they were aware, this is the ``only Arrow-type result in the literature that does not use an axiom other than \textit{IIA}'', an achievement that they believe ``could not have been conjectured without computational aid''.\footnote{In private correspondence, Sen has conjectured that the result of \citet{ma-zh-94} linking Wilson's and Arrow's theorems may be an immediate consequence of Tang and Lin's.}

Social choice is replete with characterization and impossibility results.  \citet{ge-en-11} applied the \citet{ta-li-09} approach to the problem of ranking sets of objects \citep{ka-pe-84}, for which \citet{ba-bo-pa-04} supplied almost 50 possibly desirable axioms.\footnote{\citet{gei-10} had initially attempted an approach more akin to \citet{gr-en-jpl}, seeking to derive an automated proof of the \citeauthor{ka-pe-84} theorem using three different first-order theorem provers; none of them was able to derive a proof after 120 hours of CPU time on 2.26 GHz machines with 24 GB RAM.}

Rather than deriving an induction lemma for every base case of interest, they derived a broadly applicable induction theorem based on model theory's \href{https://en.wikipedia.org/wiki/\%C5\%81o\%C5\%9B\%E2\%80\%93Tarski_preservation_theorem}{{\L}o\'{s}--Tarski preservation theorem}, which describes when properties ($\varphi$, below) are retained in substructures, namely essentially when the theory can be expressed using universal quantifiers in the form $\forall x\,.\,\varphi$.\footnote{As a trivial example, the property that a structure contains three distinct elements cannot be preserved in substructures with fewer than three elements.}

Furthermore, as they wished to distinguish between individual alternatives, sets of preferences, and preference orders the authors used a many-sorted FOL.  Many-sorted FOL also allows relations (including set inclusion or union) to be defined on one domain that do not hold on the other.

\citeauthor{ge-en-11} then encoded 20 axioms drawn from \citet{ba-bo-pa-04} in their many-sorted FOL.  As their induction result translated impossibilities generated on small, finite domains to full-blown impossibility results, they took advantage of these concrete, finite base cases to re-write the axioms in propositional logic (using the kind of rewriting that transformed formula (\ref{eq:FOL-bid}) to formula (\ref{eq:sat-ex-prop}) in section~\ref{subsection:deductive-reasoning}). This, in turn, allowed them to use SAT solvers to search for subsets of axioms which generate impossibility results in these base cases; once found, the induction theorem generalized them to full impossibility results.  Doing so for all base cases up to sets of eight items yielded 84 impossibility theorems from about one million combinations.\footnote{Resource constraints limited them to eight items and 20 axioms.  They derived their results in about one day.}

Their results included known results (e.g.\ those of \citet{ka-pe-84} and \citet{ba-pa-84}); variations on known results, typically formed by strengthening axioms to reduce the impossibility's minimal domain; direct consequences of other results (as they did not prune implications of existing impossibilities); a trivial contradiction between the axioms of uncertainty aversion and uncertainty appeal; and -- perhaps most interestingly -- new theorems.  These last resolved an open question in the literature, which we now describe.

Letting $\succ$ (resp.\ $\succsim$) denote strict (resp.\ weak) preference on individual choice objects (denoted by lower case letters), and $\triangleright$ (resp.\ $\trianglerighteq$) strict (resp.\ weak) preference on sets of objects (denoted by capital letters), \citet{bo-pa-xu-00} presented a theorem characterizing the min-max ordering in terms of four axioms.  The min-max ordering is defined as
\[
  A \trianglerighteq_{\mathit{mnx}} B \Leftrightarrow \left[ \min \left\{ A \right\} \succ \min \left\{ B \right\} \vee
  \left( \min \left\{ A \right\} = \min \left\{ B \right\} \wedge \max \left\{ A \right\} \succsim \max \left\{ B \right\} \right) \right];
\]
where $\min \left\{ A \right\}$ is the minimal element of $A$ with respect to $\succsim$ and $\max \left\{ A \right\}$ the maximal element.  Thus, a set $A$ is weakly preferred under the min-max ordering to set $B$ iff either the worst element of $A$ is strictly preferred to that or $B$, or (when the worst elements are equally preferred) the best element of $A$ is weakly preferred to that of $B$.

The four axioms were:
\begin{enumerate}
  \item \textit{simple dominance},
  \[
    x \succ y \Rightarrow \left( \left\{ x \right\} \triangleright \left\{ x, y \right\} \wedge \left\{ x, y \right\} \triangleright \left\{ y \right\} \right)
  \]
  for all $x$ and $y$, so that a set consisting of a strictly preferred object is preferred to a set containing it as well as a strictly less preferred object, which -- in turn -- is preferred to a set consisting only of that less preferred object.

  \item \textit{independence},
  \[
    A \triangleright B \Rightarrow A \cup \left\{ x \right\} \trianglerighteq B \cup \left\{ x \right\}
  \]
  for all $A$ and $B$ and $x$ not contained in $A$ or $B$.  Thus, adding a single object to two sets ranked by strict preference does not reverse that ranking (but it may weaken it).

  \item \textit{uncertainty aversion},
  \[
    \left( x \succ y \succ z \right) \Rightarrow \left\{ y \right\} \triangleright \left\{ x, z \right\}
  \]
  for all $x, y$ and $z$, so that a set consisting only of an intermediately preferred object is strictly preferred to a set consisting of a strictly more favourable and a strictly less favourable object.

  \item \textit{simple top monotonicity},
  \[
    x \succ y \Rightarrow \left\{ x, z \right\} \triangleright \left\{ y, z \right\}
  \]
  for all $x, y$ and $z$ such that $x \succ z$ and $y \succ z$, so that -- if an object is strictly preferred to another -- a set containing it and a third object is strictly preferred to a set containing the less preferred object and the third object.
\end{enumerate}

\citet{arl-03} showed that the min-max ordering was, in fact, inconsistent with the independence axiom, and presented an alternative axiomatic basis for it.  \citet{ge-en-11} presented a complementary result to \citeauthor{arl-03}'s, finding a contradiction between the four original axioms at even four choice objects, thus establishing that the original four axioms are inconsistent, so cannot form the basis of any transitive binary relationship.

\citet{ge-en-11} also presented the first impossibility result in this literature not to use any dominance axiom.

In cases of interest, the authors were able to quickly derive manual proofs for the computationally discovered results.\footnote{For the min-max ordering inconsistency, the manual proof is about a half-page long.}

Finally, the large set of impossibility results allowed the authors to statistically consider the role of the various axioms.  For example, the linear order axiom appeared in all theorems; the `even-numbered extension of equivalence' and reflexivity occurred in none; `intermediate independence' occurred in all results for seven or eight choice items, but never for fewer than five choice items.

\citet{br-ge-15} extended the methodology of \citet{ge-en-11} by performing an initial encoding in HOL, and then deriving implications capable of expression in propositional logic for small base cases.  This allowed expression of more properties than was possible in the many-sorted FOL of \citet{ge-en-11}.  Thus, \citet{br-ge-15} could encode a neutrality axiom that \citet{ge-en-11} could not, but at the cost of generating exponentially many new variables, restricting the size of cases that could be computed.

\subsection{Auctions}

Applications of mechanized reasoning to auction design and implementation are less sophisticated than those to social choice.  Nevertheless, given auctions' practical importance, we expect that these will ultimately become more widespread.  This section surveys work in two separate areas -- applying mechanized reasoning to checking results in auction theory, and checking implementations of auction designs.

On the former, Vickrey's theorem has provided a basic testbed result.  Section \ref{se:mpv} illustrates in detail our Isabelle implementation.  It therefore complements \citet{LangeEtAl:CompProvAuctThy13}, which compared implementations of Vickrey's theorem in four different mechanized reasoners.

Conceptually, as higher-order logic is sufficient to express all concepts in auction theory, it is not challenging to represent basic results in auction theory using a higher-order logic theorem prover like Isabelle.  Doing so in more basic logics is both more conceptually challenging, and may offer more promise of automation.

In simpler logics, model checking can automatically establish properties of systems by exhaustively inspecting the system's state space.  \citet{ta-gu-va-09} used SPIN, a widely-used commercial model checker based on a linear temporal logic (LTL), to verify Vickrey auctions' strategy-proof\-ness property that bidders cannot do better than to bid their valuations.\footnote{\citet{ta-gu-va-09} did not seem to use the modal capabilities of SPIN; instead, the authors seemed to adopt SPIN as they wished -- in future work -- to be able to accept C code as input, and to reason about it; reasoning about computer programs in which variables can be set does require modal capability.}  They implemented two techniques to reduce the search space while verifying strategy-proofness for arbitrary bid ranges and numbers of agents: \textit{program slicing} removed variables irrelevant to the property; \textit{abstraction} discretized the domain of bids into a three-element domain, depending on whether a bid exceeded, equalled, or was less than an agent's valuation.  A manual proof was required to establish the abstraction's soundness.  Together, the two simplifications allowed strategy-proofness to be verified for any number of agents in a Vickrey auction in a quarter of a second.

The second branch of applications of mechanized reasoning to auctions has sought to establish properties of auction designs as implemented.  This is of interest for at least two reasons: first, even if theoretical properties of an auction are known, errors may be introduced when translating the auction from a design to an operational auction.  Second, and more commonly for modern auctions, practice may simply outstrip theory.  In both cases, mechanized reasoning can be used to reduce the likelihood that an auction will fail when run.

\citet{CKLR:SASI-EC15} used Isabelle to prove that a combinatorial Vickrey auction is soundly specified, in the sense of guaranteeing that -- whatever the bids received as input -- the output allocated only the available goods, at non-negative prices, and assigned a unique output to each input.  Furthermore, it implemented two parallel specifications of the auction, the first close to its standard paper specification, and the second a constructive one. Constructive definitions are essentially algorithmic descriptions.  By contrast, definitions in classical logics need only state properties of the defined object.  For instance, a classical definition of the maximum of a (non-empty) list of bids identifies an element of the list that is greater than or equal to every other element in the list.  A constructive definition would begin by noting that -- for a one-element list -- the maximum is the single element of the list; it would then proceed recursively by computing the maximum of the remainder of the list.  It would then return the larger of the two: the initial element, or the maximum of the remaining elements.

Isabelle was used to formally prove the equivalence of the two specifications.  While the constructive specification is less intuitive, its algorithmic nature allows Isabelle to automatically generate verified executable code from it.

Model checking has also been used to examine auctions for evidence of shill bidding.  \citet{xu-ch-07} used SPIN to define predicates corresponding to suspicious behaviour, including pushing prices to a reserve price before dropping out, and bidding on the higher priced of two identical goods. The model checker was then used to see whether the predicates were present in a finite dataset of actual bidding behaviour.

\citet{ar-es-no-ra-si-05} developed a toolkit to verify properties of multi-agent environments, with a traditional open outcry auction as their leading example.  Their toolkit implemented liveness checks to ensure that agents are not blocked (i.e.\ can bid in every round),
that each bidding round can be reached, and that the final bidding round is reachable from any other, as well as correctness of the bidding language (that is, that by following the rules, the system always remains in a defined state).
Their toolkit also includes a simulation tool that conducts a `what-if' analysis by performing a complete check of all cases.  While the authors themselves do not refer to what they do as model checking, that is what it most closely resembles.

Finally, \citet{BaiTadjouddineGuo2014} consider the question of how potential users of online auctions can trust the auctions' protocols.  They develop a protocol for specifying auction designs that can be read by Coq, a mechanized reasoner.  Future work building on this should eventually allow Coq to verify properties claimed for the auction.

\section{Blueprint of a formal proof of Vickrey's theorem} \label{se:mpv}

The preceding has provided an overview of mechanized reasoning, both in general, and as applied to economic problems.  This section provides a detailed description of how a mechanized reasoner is used in practice, in this case to verify a formal proof of Vickrey's theorem.  We use Vickrey's familiar theorem to focus attention on the formal proof's implementation, rather than the details of the result or proof.

We begin with a standard statement of Vickrey's theorem and proof, in this case from \citet{mas-04}:
\begin{theorem}[Vickrey 1961]
  In a second-price auction, it is (weakly) dominant for each buyer $i$ to bid its valuation $v_i$. Furthermore, the auction is efficient.
\end{theorem}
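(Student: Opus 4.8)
The plan is to prove the two claims separately, both resting on a single structural observation: in a second-price auction the price that buyer $i$ pays \emph{if it wins} is the highest of the competing bids, $p := \max_{j \neq i} b_j$, which does not depend on $i$'s own bid $b_i$. Consequently, by choosing $b_i$ buyer $i$ controls only \emph{whether} it wins, not the price it would pay; its payoff is $v_i - p$ when it wins and $0$ when it loses. This decoupling of ``do I win'' from ``what do I pay'' is the engine of the whole argument.

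First I would establish that truthful bidding is weakly dominant. Fix an arbitrary buyer $i$, an arbitrary profile of competing bids $b_{-i}$, and set $p = \max_{j \neq i} b_j$. I would then compare the truthful bid $b_i = v_i$ against an arbitrary alternative bid $b_i'$ by splitting on the sign of $v_i - p$. If $v_i > p$, the truthful bid wins and earns $v_i - p > 0$; any alternative that still wins earns the same, while any alternative that loses earns $0 < v_i - p$. If $v_i < p$, the truthful bid loses and earns $0$; any alternative that wins earns $v_i - p < 0$, while any losing alternative earns $0$. If $v_i = p$, every possible outcome yields a payoff of $0$. In each case the truthful payoff is weakly larger, and since $b_{-i}$ was arbitrary, bidding $v_i$ weakly dominates every other strategy.

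Second, for efficiency I would invoke the equilibrium just identified: when every buyer bids truthfully, $b_j = v_j$ for all $j$, so the good is awarded to a buyer maximizing $v_j$, i.e.\ to one who values it most. Allocating to a highest-valuation buyer maximizes total surplus, which is precisely the sense in which the auction is efficient; this part is an immediate corollary of the allocation rule once truthfulness is substituted in.

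The hard part will not be the case analysis, which is routine, but pinning down the definitions carefully enough to carry it through — in particular the tie-breaking rule when several bids coincide with $p$, together with formal statements of ``weakly dominant'' and ``efficient''. The boundary situations ($v_i = p$ and $b_i' = p$) are where a proof assistant like Isabelle earns its keep, since every tie must be discharged explicitly rather than waved away. It is exactly the payment-independence observation that makes these tie cases collapse: whether or not $i$ wins a tie, its payoff is either $0$ or $v_i - p$, and in the relevant case these coincide, so the dominance conclusion survives regardless of how ties are broken.
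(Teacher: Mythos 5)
Your proposal is correct and follows essentially the same route as the paper's proof: both rest on the observation that the price a winner pays, $\max_{j \neq i} b_j$, is independent of $i$'s own bid, then run a routine case comparison of the truthful bid against an arbitrary deviation, with efficiency following immediately once truthful bids are substituted into the allocation rule. The only cosmetic difference is that you split on the sign of $v_i - \max_{j \neq i} b_j$ (three cases) while the paper's elaborated Proof \#2 splits on the win/lose outcome under each of the two bids (four cases); your organization makes the tie case slightly more explicit, which is indeed the point that matters for formalization.
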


\begin{proof}[Proof \#1]
  Suppose that buyer $i$ bids $b_i < v_i$. The only circumstance in which the outcome for $i$ is changed by its bidding $b_i$ rather than $v_i$ is when the highest bid $b$ by other bidders satisfies $v_i > b > b_i$. In that event, buyer $i$ loses by bidding $b_i$ (for which its net payoff is $0$) but wins by bidding $v_i$ (for which its net payoff is $v_i - b$). Thus, it is \textit{worse} off bidding $b_i < v_i$. By symmetric argument, it can only be worse off bidding $b_i > v_i$.  We conclude that bidding its valuation (truthful bidding) is weakly dominant. Because it is optimal for buyers to bid truthfully and the high bidder wins, the second-price auction is efficient.
\end{proof}

However intelligible to humans, \citeauthor{mas-04}'s proof is too stylized for computers: that there is only one circumstance in which changing bids changes the outcome is merely asserted; the ``symmetric argument'' is not explicitly elaborated.  Before formalizing it, we therefore elaborated the paper proof, and restructured it to four cases, rather than the original nine:

\begin{proof}[Proof \#2]
  Let $N$ be the set of bidders, and suppose bidder $i$ bids $b_i = v_i$, whatever $b_j$ each other bidder $j \ne i$ bids.  There are two cases:
  \begin{enumerate}[nosep]
    \item $i$ wins.  This implies $b_i  = v_i = \max_{j \in N} \left\{ b_j \right\}$,
    $p_i = \max_{j \in N \backslash \left\{ i \right\}} \left\{ b_j \right\}$, and
    $u_i \left( \bm{b} \right) = v_i - p_i \ge 0$.  Now consider $i$ submitting an arbitrary bid $\hat{b}_i \ne b_i$ so that the bid vector is $\left( b_1, \ldots, b_{i-1}, \hat{b}_i, b_{i+1}, \ldots, b_n \right)$.  This has two sub-cases:
    \begin{enumerate}[nosep]
      \item $i$ wins with $\hat{b}_i$, so that $u_i \left( b_1, \ldots, b_{i-1}, \hat{b}_i, b_{i+1}, \ldots, b_n \right) = u_i \left( \bm{b} \right)$: $i$ receives the same utility from winning the item, and pays the same price as the second highest bid has not changed.
      \item $i$ loses with $\hat{b}_i$, so that $u_i \left( b_1, \ldots, b_{i-1}, \hat{b}_i, b_{i+1}, \ldots, b_n \right) = 0 \le u_i \left( \bm{b} \right)$.
    \end{enumerate}
    \item $i$ loses. This implies $p_i = 0$, $u_i \left( \bm{b} \right) = 0$, and
      $b_i \le \max_{j \in N \backslash \left\{ i \right\}} \left\{ b_j \right\}$ as, otherwise, $i$ would have won. This yields again two cases for $i$'s alternative bid $\hat{b}_i$:
    \begin{enumerate}[nosep]
      \item $i$ wins, so that
        $u_i \left( b_1, \ldots, b_{i-1}, \hat{b}_i, b_{i+1}, \ldots, b_n \right) = v_i - \max_{j \in N \backslash \left\{ i \right\}} \left\{ b_j \right\} = b_i - \max_{j \in N \backslash \left\{ i \right\}} \left\{ b_j \right\} \le 0 = u_i \left( \bm{b} \right)$.
      \item \label{case:l-l} $i$ loses, so that $u_i \left( b_1, \ldots, b_{i-1}, \hat{b}_i, b_{i+1}, \ldots, b_n \right) = 0 = u_i \left( \bm{b} \right)$.
    \end{enumerate}
  \end{enumerate}
  By analogy for all $i$, $\bm{b} = \bm{v}$ supports an equilibrium in weakly dominant strategies.  Efficiency is immediate: the highest bidder has the highest valuation.
\end{proof}

To formally prove Vickrey's theorem, we used Isabelle, whose higher-order logic allows our formalization to remain close to paper mathematics.

Our proof, \textit{Vickrey.thy}, is a 9 KB, 185 line file that draws on five ancillary files written for this project.\footnote{See \url{https://github.com/formare/auctions/tree/master/isabelle/Auction} for the code.}  All six files amount to 17 KB and 404 lines – much longer than their paper counterparts.  A more reliable estimate of the additional effort involved in formal proofs, the \textit{de Bruijn factor}~\citep{Wiedijk:tdbf:on}, cleans and compresses files before dividing the size of the code by the size of an informal {\TeX} source.  It thus avoids bias by semantically irrelevant differences in the syntaxes of formalisations such as languages or code styles using different lengths of lines or of identifiers.  The de Bruijn factor relating Proof \#2 and its definitions (including $\max$) to our Isabelle code is 1.1; as our \TeX\ source is more elaborate than usual, this is lower than the typically observed factors of around four.

Figure \ref{fig:th_gr} depicts the files used in the proof.  Those already in Isabelle's library are marked by ellipses.  Dotted ellipses denote files containing general definitions and lemmas that we have added to Isabelle's library.  Rectangles denote this paper's auction-specific files.  Directed edges denote dependence, with the source code being imported into the target code.

\begin{figure}[ht]
  \begin{tikzpicture}
    \input{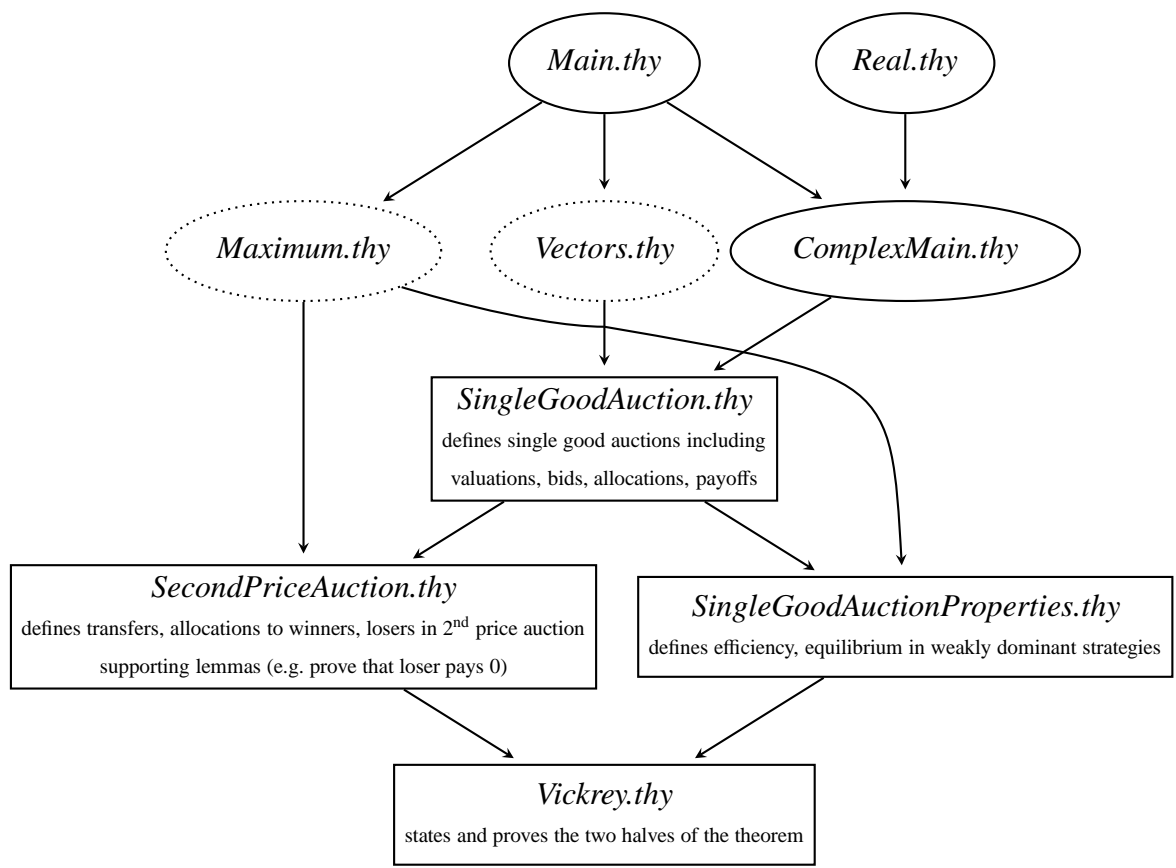}
  \end{tikzpicture}
  \caption{High level theory graph for the formal proof of Vickrey's theorem}
  \label{fig:th_gr}
\end{figure}

\textit{Vickrey.thy} begins with \textit{vickreyA}, which proves that truth telling is weakly dominant in Vickrey auctions:

\begin{pmi}
  \pmikw{theorem} \pmiid*{vickreyA}:\pminl
  \quad \pmikw{fixes} N::\pmitm{\pmiid{participant} \pmiid*{set}}   \pmikw{and} v::\pmiid{valuations} \pmikw{and} A::\pmiid*{single\_good\_auction}\pminl
  \quad \pmikw{assumes} \pmiid*{val}: \pmitm*{\pmiid{valuations} N\ v} \pminl
  \quad \pmikw{defines} \pmitm*{b ≡ v}\pminl
  \quad \pmikw{assumes} \pmiid*{spa}: \pmitm{\pmiid{second\_price\_auction} A}\pmikw{and} \pmiid*{card\_N}: \pmitm*{\pmiid{card} N > 1}\pminl
  \quad \pmikw{shows} \pmitm*{\pmiid{equilibrium\_weakly\_dominant\_strategy} N\ v\ b\ A}\pmill
\end{pmi}

The \textbf{fixes} keyword applies the theorem to any $N$, $v$ and $A$ of the given types.  The type \textit{single\_good\_auction} is defined as an $\mathit{input}\times \mathit{output}$ relation, with the bidders and their bids as input, and a Boolean allocation vector and a vector of transfers as outcome.\footnote{This can be seen from expressions such as $((N, b'), (x', p')) \in A$.}  The $\mathit{valuations}$ type is defined elsewhere to be a vector of real numbers.
The \textbf{assumes} keyword on the next line states that the theorem holds under an assumption labeled \textit{val}, namely that in the vector $\mathit{v}$ of $\mathit{N}$ real numbers, all numbers are non-negative (this defined at another place as the definition of `valuations'). \todo{CR@MK: is this usage of \textit{valuations} a function, a predicate?}

Next, the \textbf{defines} declaration equates bids and valuations.  The following \textbf{assumes} keyword introduces and labels further assumptions (e.g.\ $A$ is a second-price auction; $N$ contains more than one bidder).  The \textbf{shows} keyword states the theorem: $N$ agents participating in auction $A$, with valuations $v$ and bids $b$ (equated to valuations) yields an equilibrium in weakly dominant strategies.

\textit{SingleGoodAuctionProperties.thy} defines the equilibrium concept:
\begin{pmi}
\pmikw{definition} \pmiid{equilibrium\_weakly\_dominant\_strategy} ::\pminl
\quad  \pmitm{\pmiid{participant} \pmiid*{set} ⇒ \pmiid*{valuations} ⇒ \pmiid*{bids} ⇒ \pmiid*{single\_good\_auction} ⇒ \pmiid*{bool}} \pmikw*{where}\pminl
\quad  \pmitm*{\pmiid{equilibrium\_weakly\_dominant\_strategy} N\ v\ b\ A\ ⟷\pminl
\qquad    \pmiid{valuations} N\ v ∧ \pmiid{bids} N\ b ∧ \pmiid{single\_good\_auction} A\ ∧\pminl
\qquad    (∀i ∈ N\,.\pminl
\qquad\quad      (∀\pmiid*{whatever\_bid}\,.\,\pmiid{bids} N\ \pmiid*{whatever\_bid} ⟶\pminl
\qquad\qquad        (\pmikw{let} b' = \pmiid*{whatever\_bid}(i := b\ i)\pminl
\qquad\qquad\quad         \pmikw{in} (∀ x\ p\ x'\ p'\,.\,((N, \pmiid*{whatever\_bid}), (x, p)) ∈ A ∧ ((N, b'), (x', p')) ∈ A\pminl
\qquad\qquad\qquad          ⟶ \pmiid{payoff}\ (v\ i)\ (x'\ i)\ (p'\ i) ≥ \pmiid{payoff}\ (v\ i)\ (x\ i)\ (p\ i)))))}\pmill
\end{pmi}

The definition's second line declares the type of the \textit{equilibrium\_weakly\_dom}\allowbreak \textit{inant\_strategy} 
to be a (Boolean) predicate whose arguments are a set of participants, a valuation vector, a bid vector, and an auction.\footnote{The $A⇒B⇒C$ notation, referred to as \textit{currying}, is equivalent to $A\times B\to C$, but is conceptually simpler as it does not require definition of a $\times$ operation.}  The definition's body states that the predicate, given arguments $N$, $v$, $b$ and $A$, evaluates to true if and only if the remaining expression does.  The expressions in the subsequent line ensure that all arguments have admissible values.
\todo{CR@CL: I've cut this explanation. I'm aiming at minimal explanations - giving readers enough of a sense of the objects, without turning this into a tome.}
Similarly, our first step when introducing $\mathit{whatever\_bid}$ is to ensure that it is an admissible bid vector.  The $\mathit{whatever\_bid}(i := b\ i)$ notation then takes an arbitrary vector and replaces its $i$th component with $i$'s bid $b\ i$ (which the theorem equates to $i$'s valuation).\footnote{The code snippet contains various instances of ``.'': these are separators that improve readability.}

We denote the outcome of an arbitrary bid ($\mathit{whatever\_bid}$) by $\left( \bm{x}, \bm{p} \right)$, while $\left( \bm{x}', \bm{p}' \right)$ denotes that of $i$'s original bid and arbitrary bids by agents $j \ne i$.  To satisfy the definition of an equilibrium in weakly dominant strategies, the outcome $\left( \bm{x}', \bm{p}' \right)$ of $i$'s truthful bid must yield a payoff no less than that resulting from an arbitrary bid. The $\pmikw{let} \cdots\ \pmikw{in} \cdots$ notation\footnote{We use ``$\cdots$'' to distinguish the standard use of ellipses from Isabelle's ``\dots'' notation, whose meaning we introduce when explaining line 30 of the following code snippet.} introduces local abbreviations, which can only be accessed within the $\pmikw*{in}$ block; here, this makes the expression $((N, b'), (x', p')) ∈ A$ more readable. \todo{CR@CL: I don't understand what the alternative to the local variables is; how would it be less readable?  I'm also wondering whether this misses the main point: we use the ``let \ldots in'' notation to show something, not just to manipulate local variables.\\
CL@CR: No, here we are not using the “let \ldots in” notation to \emph{show} anything, but really just as syntactic sugar.  Example: “let k = f(x,g(y)) in h(z,k)” is the same as “h(z,f(x,g(y)))”.  It really pays off once $k$ occurs more than once in the “in” part, but sometimes it also makes a single occurrence more readable.  In the {\LaTeX} source below this comment I'll put a variant of the source lines without the “let”.  Choose which one you like better.  Note that in one footnote above we are making a reference to $b'$, so this footnote would then have to be rephrased.\\
MK@CR: Let's speak instead of ``local variable'' of an ``abbreviation''. We can this way use $b'$ as a shorthand notation for $whatever_bid(i := b i)$.}


The code snippet below formalizes case \ref{case:l-l} of Proof \#2.  It is \textit{declarative}, resembling a textbook proof.  \textit{Procedural} proofs, by contrast, prescribe \textit{tactics} to apply, thus more resembling the \textit{process} humans use to find proofs.  In either case, each theorem creates a \textit{proof obligation}, or a \textit{goal}; these may be broken into \textit{subgoals} (e.g.\ by case distinction); the set of local proof obligations implied by these subgoals are stored on a \textit{goal stack}.

\begin{proof}[Proof \#3]\mbox{}\
\begin{lstlisting}[mathescape,numbers=left]
$\pmikw{proof} -$
$\quad (* \cdots *)$
$\quad \{$
$\qquad \pmikw{fix} i::\pmiid*{participant}$
$\qquad \pmikw{assume} \pmiid{i\_range}: \pmitm*{i ∈ N}$
$\qquad (* \cdots *)$
$\qquad \pmikw{let} ?b = \pmitm*{\pmiid*{whatever\_bid}(i := b\ i)}$
$\qquad (* \cdots *)$
$\qquad \pmikw{have} \pmiid*{weak\_dominance}: \pmitm*{\pmiid{payoff} (v\ i)\ (x'\ i)\ (p'\ i) ≥ \pmiid{payoff} (v\ i)\ (x\ i)\ (p\ i)}$
$\qquad \pmikw{proof} \pmiid*{cases}\label{line:proof-cases1}$
$\qquad\quad \pmikw{assume} \pmiid*{non\_alloc}: \pmitm*{x'\ i ≠ 1}$
$\qquad\quad \pmikw{with} \pmiid{spa\_pred'} \pmiid{i\_range} \pmikw{have} \pmitm{x'\ i = 0} \pmikw{using} \pmiid{spa\_allocates\_binary} \pmikw{by} \pmiid*{blast}\label{line:blast1}$
$\qquad\quad \pmikw{with} \pmiid{spa\_pred'} \pmiid{i\_range} \pmikw{have} \pmiid*{loser\_payoff}: \pmitm*{\pmiid{payoff} (v\ i)\ (x'\ i)\ (p'\ i) = 0}$
$\qquad\qquad \pmikw{by} (\pmiid{rule} \pmiid*{second\_price\_auction\_loser\_payoff})\label{line:by-rule1}$
$\qquad\quad \pmikw{have} \pmiid*{i\_bid\_at\_most\_second}: \pmitm*{?b\ i ≤ \pmiid*{?b\_max'}}$
$\qquad\quad \pmikw{proof} (\pmiid{rule} \pmiid*{ccontr})\label{line:proof-rule}$
$\qquad\qquad \pmikw{assume} \pmitm*{¬ \pmiid*{?thesis}}\label{line:assume1}$
$\qquad\qquad \pmikw{then} \pmikw{have} \pmitm{?b\ i > \pmiid*{?b\_max'}} \pmikw{by} \pmiid*{simp}\label{line:then-have1}\label{line:by-simp1}$
$\qquad\qquad \pmikw{with} \pmiid{defined} \pmiid{spa\_pred'} \pmiid{i\_range} \pmikw{have} \pmitm*{\pmiid{second\_price\_auction\_winner} N\ ?b\ x'\ p'\ i}\label{line:with1}$
$\qquad\qquad\quad \pmikw{by} (\pmiid{simp} \pmiid*{add}:\ \pmiid*{only\_max\_bidder\_wins})\label{line:by-simp2}$
$\qquad\qquad \pmikw{with} \pmiid{non\_alloc} \pmikw{show} \pmiid{False}\label{line:with2}$
$\qquad\qquad\quad \pmikw{unfolding} \pmiid{second\_price\_auction\_winner\_def}\label{line:unfolding-start}$
$\qquad\qquad\qquad \pmiid{second\_price\_auction\_winner\_outcome\_def} \pmikw{by} \pmiid*{blast}\label{line:unfolding-end}\label{line:blast2}$
$\qquad\quad \pmikw{qed}$
$\qquad\quad \pmikw{show} \pmiid*{?thesis}$
$\qquad\quad \pmikw{proof} \pmiid*{cases}\label{line:proof-cases2}$
$\qquad\qquad \pmikw{assume} \pmitm*{x\ i ≠ 1}\label{line:assume2}$
$\qquad\qquad \pmikw{then} \pmikw{have} \pmitm{x\ i = 0} \pmikw{by} (\pmiid{rule} \pmiid*{spa\_allocates\_binary'})\label{line:xi0}\label{line:then-have2}\label{line:by-rule2}$
$\qquad\qquad \pmikw{with} \pmiid{spa\_pred} \pmiid{i\_range} \pmikw{have} \pmitm{\pmiid{payoff} (v\ i)\ (x\ i)\ (p\ i) = 0}\label{line:have}$
$\qquad\qquad\quad \pmikw{by} (\pmiid{rule} \pmiid*{second\_price\_auction\_loser\_payoff})\label{line:by-rule3}$
$\qquad\qquad \pmikw{also} \pmikw{have} \pmitm{... = \pmiid{payoff} (v\ i)\ (x'\ i)\ (p'\ i)} \pmikw{using} \pmiid{loser\_payoff} ..\label{line:also-have}\label{line:..}$
$\qquad\qquad \pmikw{finally} \pmikw{show} \pmiid{?thesis} \pmikw{by} (\pmiid{rule} \pmiid*{eq\_refl})\label{line:finally}\label{line:by-rule4}$
$\qquad\quad \pmikw{next}$
$\qquad\qquad (* \cdots *)$
$\qquad\quad \pmikw{qed}$
$\qquad \pmikw{next}$
$\qquad\quad (* \cdots *)$
$\qquad \pmikw{qed}$
$\quad \}$
$\quad (* \cdots *)$
$\pmikw{qed}$
\end{lstlisting}

\end{proof}

The \textbf{proof} keyword starts the proof.  Invoked alone, Isabelle would automatically select inference rules to apply.  \textbf{proof –} performs manual inference.  Alternatively, one can specify existing inference rules:
\begin{itemize}
  \item \textbf{proof} \textit{cases} (lines \ref{line:proof-cases1} and \ref{line:proof-cases2}) makes a case distinction; analysis of each case concludes by \textbf{show}ing that the desired thesis holds; \textbf{qed} clears the goal stack; \textbf{next} begins the next case.
  \item \textbf{proof} \textit{(rule ccontr)} (line \ref{line:proof-rule}) undertakes proof by contradiction, culminating in \textbf{show} \textit{False}.
\end{itemize}

The proof considers an arbitrary but fixed participant $i$, which is introduced locally with the $\pmikw*{fix}$ keyword, and assumed to be in the admissible range $N$ for bidders.\footnote{In Isabelle, the descriptive form of a verb (e.g.\ \textbf{fixes}, \textbf{assumes} or \textbf{shows}) are often used when stating theorems, while their imperative counterparts (e.g.\ \textbf{fix}, \textbf{assume} or \textbf{show}) are used locally in proofs.}

The \textbf{have} statements establish local facts, generating local proof obligations, which have to be discharged by corresponding \textbf{proof}s.  Here, the \textit{cases} proof establishes that $u_i \left( \cdots, v_i, \cdots \right) \ge u_i \left( \cdots, b_i, \cdots \right)$.  This proof makes use of further facts, omitted to keep the snippet readable: $\pmiid*{spa\_pred}$ and $\pmiid*{spa\_pred'}$ state that $((N, \mathit{whatever\_bid}), (x, p))$ and 
$((N, ?b), (x', p'))$ respectively are in an $(\mathit{input}, \mathit{outcome})$ relationship of a second price auction with each other.\footnote{Isabelle syntactically substitutes identifiers starting with \textit{?} by other, usually more complex expressions before checking a proof step.  Syntactic substitution is performed, for example, by the preprocessor of many programming languages, allowing the programmer to use shorthand designations rather than writing complicated expressions in full.  It is distinct from the semantic equation of two variables, as in ``$b \equiv v$''.}  $\pmiid*{defined}$ states that a vector with one component per element of the (finite) set $N$ has a well-defined maximum component.

Both \textbf{from} and \textbf{using} introduce facts to discharge the \textbf{have} obligations.  The \textbf{by} keyword invokes an automated proof method, instead of discharging proof obligations by explicit declarative means.  Isabelle thus combines ATP and ITP methods.
\begin{enumerate}
  \item \textit{simp} (lines \ref{line:by-simp1} and \ref{line:by-simp2}) simplifies (e.g.\ $x \wedge x = x$) the statement to be proved.  Line \ref{line:by-simp2} supplies a simplification rule of our own, $\pmiid{only\_max\_bidder\_wins}$.
  \item \textit{blast} (lines \ref{line:blast1} and \ref{line:blast2}) ``is (in principle) a complete proof procedure for first-order formulas''~\citep{isabelle-prog-prove}.  In practice, \textit{blast} either succeeds, fails, or -- giving a practical example of semi-decidability -- runs until the user cancels it.
  \item\label{it:rule} \textit{rule} (lines \ref{line:by-rule1}, \ref{line:proof-rule}, \ref{line:by-rule2}, \ref{line:by-rule3} and \ref{line:by-rule4}) applies the given lemma as an inference rule.
      In line~\ref{line:..}, ``$..$''  abbreviates $\pmikw{by} \pmiid*{rule}$, which automatically applies a matching inference rule.
\end{enumerate}
While interactively developing the proof, we employed the \textbf{try} and \textbf{try0} commands, which apply a range of automated methods, to find the most appropriate proof methods.  Automated calls can always be replaced by explicit declarative steps; Isabelle's Sledgehammer tool~\citep{isabelle-sledgehammer} can sometimes provide them automatically.

The \textbf{assume} $\cdots$ \textbf{then have} constructions (lines \ref{line:assume1} and \ref{line:then-have1}, and \ref{line:assume2} and \ref{line:then-have2}) list assumptions \textbf{then} state the proof obligations.  Line \ref{line:assume1}'s identifier \textit{?thesis} refers to the proof obligation at the proof's current level of reasoning.

Lines \ref{line:unfolding-start} -- \ref{line:unfolding-end}'s \textbf{unfolding} also performs substitutions, replacing stated concepts' names with the bodies of their definitions.  Unlike abbreviations with \textit{?}, the latter are semantic definitions, of which the reasoner make use (e.g.\ \textit{second\_price\_auction\_winner\_def} is restated in terms of $i \in N$, $i \in \arg\max \bm{b}$, \ldots).

Lines \ref{line:have}–\ref{line:finally}'s \textbf{have} $\cdots$ \textbf{also have} $\cdots$ \textbf{finally show} construction allows chains of reasoning with equality before discharging a proof obligation: the ``\dots'' following the \textbf{also have} are replaced by the right hand side of the previous \textbf{have} statement.  In line \ref{line:also-have}, this establishes that $i$ receives zero given valuation $v_i$ and either $\left( \bm{x}, \bm{p} \right)$, or $\left( \bm{x}', \bm{p}' \right)$.

\section{Discussion} \label{se:concle}

\todo{CL@CR: Not that this is necessarily a problem, but this discussion only covers the state of the art as estabished by \emph{others} (it is thus, once more, a literature review), but it does not refer to any of \emph{our} work (i.e.\ Section~\ref{se:mpv}).  CR@CL: I think this is OK; this paragraph is v.\ brief, so just a highlight; I don't think that our work is highlightable yet.}The decade since the mechanized reasoning community became interested in economic applications has seen rapid progress.  When \citeauthor{nip-09} reported on his formalization of Arrow's theorem, he agreed that ``[s]ocial choice theory turns out to be perfectly suitable for mechanical theorem proving'', but felt that it was ``unclear if [it] will lead to new insights into either social choice theory or theorem proving'' \citep{nip-09}.  However, that very year \citet{ta-li-09} used mechanized reasoning to discover a new theorem that subsumes Arrow's, which \citet{ch-se-14} believed to be novel, and unlikely to have been found with traditional methods.  Shortly thereafter, \citet{ge-en-11} contributed their 84 impossibility theorems.

If mechanized reasoning is to make further inroads into economics it must be sensitive to a number of concerns.  First, economics has no proofs of comparable complexity or length to significant results in modern mathematics.  Thus, the question of whether a proof will exceed the capability of human theorists to verify is less of a concern than in mathematics.  Further, it is unclear that there have been any disastrous cases of mistaken proofs within economics; instead, our greater errors likely result from poor modelling in the first place, and coding or data errors in econometrics. 

Second, even when mechanized reasoners have helped identify new results, economic theorists may dismiss them as unmotivated, non-transparent or lacking insight.\footnote{See \citet[p.73]{av-ha-14} for a discussion of the tension between rigour and insight in pure mathematics.}  Even, however, in the worst case, we believe that a stock of poorly-motivated, non-transparent theorems generated blindly by computer provide cases for us to think about and reason with: the presence of the intermediate independence axiom in all of the larger impossibility theorems found by \citet{ge-en-11} should provide precisely the sort of hunch that sets us sharpening our pencils.

We close by suggesting some further possible applications of mechanized reasoning to economic problems.

First, there are open problems in auction theory that seem amenable to solution by computation (rather than `reasoning').  For example, the simplest formulation of optimal multi-object auctions \citep[q.v.][]{arm-00} defines a linear programming problem that quickly becomes too large to solve manually as the number of items increases.\footnote{See \citet{ar-ro-99} for the equivalent multi-dimensional screening problem for a monopolist.}  As efficient algorithms exist for solving linear programming problems, \textit{automated mechanism design} \citep[q.v.][]{co-sa-03} has already begun to address the purely computational aspects of optimal mechanism design.  As formal methods can be used to verify the results of computations \citep[q.v.][]{gon-08,hales2015formal}, proofs in automated mechanism design could also be verified by formal methods. \todo{CR@all: [no action required] Crampton, Klemperer not aware of anything subsequent to/more authoritative than \citet{arm-00}. \citet{co-sa-03} seem to be able to handle special cases with discrete valuation distributions.  n.b.\ \citet{arm-00} also has discrete valuations. Where do they properly describe their \S5.2.2 `experiment'?}

Second, we believe that the exhaust-then-induct technique pioneered by \citet{ta-li-09}, and developed by \citet{ge-en-11}, offers the promise of automating search for theorems in other areas of economic theory.  The formal similarities between social choice and matching theory -- including a reliance on discrete objects -- suggests that this technique could be applied directly to the latter.  Although auction theory appears richer in its use of continuous objects (prices), there is a small literature establishing results by induction \citep{ch-se-07,mo-se-15,ada-14,ka-oh-ta-15}; the possibility of coupling their induction steps with computational exhaustion has not been explored.


However these tools are applied within economics, it is hard to imagine them not becoming more important, as the tools themselves become faster and easier to use, as they gain acceptance within the pure mathematics community, and as the mechanized reasoning community seeks more applications for them.




\printbibliography

\end{document}
